\def\cc{{\mathcal C}}
\def\cf{{\mathcal F}}
\def\cg{{\mathcal G}}
\def\ch{{\mathcal H}}
\def\cl{{\mathcal L}}
\def\ct{{\mathcal T}}
\def\E{{\mathbb E}}
\def\L{{\mathbb L}}
\def\N{{\mathbb N}}
\def\P{{\mathbb P}}
\def\R{{\mathbb R}}
\def\ee{{\bf e}}
\def\hl{\widehat{\lambda}}
\def\hL{\widehat{\Lambda}}
\def\L{\Lambda}
\def\l{\lambda}
\def\htau{\widehat{\tau}}
\def\ind#1{{\bf 1}_{\left\{#1\right\}}}
\def\abs#1{\left|#1\right|}
\def\Var{\mathop{\rm Var}\nolimits}
\def\norm#1{\mathop{\left\| #1 \right\|}\nolimits}
\def\inv#1{\mathop{\frac{1}{ #1}}\nolimits}
\def\expp#1{\mathop {\mathrm{e}^{ #1}}}
\DeclareMathOperator{\Span}{span}
\DeclareMathOperator{\ClosedSpan}{\overline{span}^{L^2(\Omega, \cf_T)}}
\newcommand{\JL}[1]{\textcolor{blue}{#1}}
\newcommand{\spaceIndex}[1][]{
  \ifthenelse{\equal{#1}{}}
    {A^{\otimes d}_{p,n}}
    {A^{\otimes d}_{p,n|#1}}
}
\theoremstyle{plain}
\newtheorem{theorem}{Theorem}[section]
\newtheorem{proposition}[theorem]{Proposition}
\newtheorem{lemma}[theorem]{Lemma}
\newtheorem{remark}[theorem]{Remark}
\theoremstyle{nonumberplain}
\newtheorem{proof}{Proof}
\title{Pricing path-dependent Bermudan options using Wiener chaos expansion: an embarrassingly parallel approach\thanks{The High Performance Computations presented in this paper were performed using the Froggy platform of the CIMENT infrastructure (https://ciment.ujf-grenoble.fr), which is supported by the Rhône-Alpes region (GRANT CPER07\_13 CIRA) and the Equip@Meso project (reference ANR-10-EQPX-29-01) of the programme Investissements d'Avenir supervised by the Agence Nationale pour la Recherche.}}
\date{\today}
\author{Jérôme Lelong
  \thanks{Univ. Grenoble Alpes, CNRS, Grenoble INP, LJK, 38000 Grenoble, France. \newline
\indent email: \texttt{jerome.lelong@univ-grenoble-alpes.fr}}}
\begin{document}
\maketitle

\begin{abstract}
	In this work, we propose a new policy iteration algorithm for pricing Bermudan options when the payoff process cannot be written as a function of a lifted Markov process. Our approach is based on a modification of the well-known Longstaff Schwartz algorithm, in which we basically replace the standard least square regression by a Wiener chaos expansion. Not only does it allow us to deal with a non Markovian setting, but it also breaks the bottleneck induced by the least square regression as the coefficients of the chaos expansion are given by scalar products on the $L^2(\Omega)$ space and can therefore be approximated by independent Monte Carlo computations. This key feature enables us to propose an embarrassingly parallel algorithm to efficiently handle non Markovian payoff.

  \noindent\textbf{Key words}: path-dependent Bermudan options, optimal stopping, regression methods, high performance computing, Wiener chaos expansion. \\

  \noindent\textbf{AMS subject classification}: 62L20, 62L15, 91G60, 65Y05, 60H07
\end{abstract}

\section{Introduction}
\label{sec:intro}

We fix some finite time horizon $T>0$ and a filtered probability space $(\Omega, \cf, (\cf_t)_{0 \le t \le T}, \P)$, where $(\cf_t)_{0 \le t \le T}$ is supposed to be the natural augmented filtration of a $d-$dimensional Brownian motion $B$.  On this space, we consider an adapted process $(S_t)_{0 \le t \le T}$ with values in $\R^{d'}$ modeling a $d'$--dimensional underlying asset, with $d' \le d$. The number of assets $d'$ can be strictly smaller than the dimension $d$ of the Brownian motion to encompass the case of stochastic volatility models or stochastic interest rates. We assume that  $\P$ is a risk neutral measure. We consider a Bermudan option with exercising dates $0 = t_0 \le T_1 < T_2 < \dots < T_N = T$ and discounted payoff $Z_{T_k}$ if exercised at time $T_k$.  We assume that the discrete time payoff process $(Z_{T_k})_{0 \le k \le N}$ is adapted to the filtration $(\cf_{T_k})_{0 \le k \le N}$ and satisfies $\max_{0 \le k \le N } \abs{Z_{T_k}} \in L^2$. This framework naturally encompasses the case of path-dependent options, ie. when the payoff process writes $\tilde Z_{T_k} = \phi_k((S_u; 0 \le u \le T_k))$ for any $0 \le k \le N$.

Standard arbitrage pricing theory defines the discounted value of the Bermudan option at times $(T_k)_{0 \le k \N}$ by
\begin{equation}
  \label{eq:dpp}
  \begin{cases}
    U_{T_N} & = Z_{T_N} \\
    U_{T_k} & = \max\left( Z_{T_k}, \E[U_{T_{k+1}} | \cf_{T_k}] \right)
  \end{cases}
\end{equation}
Solving this backward recursion known as the dynamic programming principle has been a challenging problem for years and various approaches have been proposed to approximate its solution. The real difficulty lies in the computation of the conditional expectation $\E[U_{T_{k+1}} | \cf_{T_k}]$ at each time step of the recursion. If we were to classify the different approaches, we could say that there are regression based approaches (see \cite{carriere96,tsit:vanr:01} and quantization approaches (see \cite{bapa03,pages13_quantization}). We refer to \cite{BoucharWarin12} and \cite{pages18numerical} for a survey of the different techniques to price Bermudan options.

Among all the available algorithms to compute $U$ using the dynamic programming principle, the one proposed by~\cite{LS01} has the favour of practitioners. Their approach is based on iteratively selecting the optimal policy. Let $\tau_k$ be the smallest optimal policy after time $T_k$, then
\begin{equation}
  \label{eq:policy-iteration}
  \begin{cases}
    \tau_N = T_N \\
    \tau_k = T_k \ind{Z_{T_k} \ge \E[Z_{\tau_{k+1}} | \cf_{T_k}]} + \tau_{k+1} \ind{Z_{T_k} < \E[Z_{\tau_{k+1}} | \cf_{T_k}]}, \; \text{for $1 \le k \le N-1$}
  \end{cases}
\end{equation}
All these methods based on the dynamic programming principle either as value iteration~\eqref{eq:dpp} or policy iteration~\eqref{eq:policy-iteration} require a Markovian setting to be implemented such that the conditional expectation knowing the whole past can be replaced by the conditional expectation knowing only the value of a Markov process at the current date.
The theory of the Snell envelope states that the sequence $U$ also satisfies
\begin{equation}
  \label{eq:price}
  U_{T_k} = \sup_{\tau \in \ct_{T_k, T}} \E[Z_{\tau} | \cf_{T_k}].
\end{equation}

When the discounted payoff process writes $Z_{T_k} =  \phi_k(X_{T_k})$, for any $0 \le k \le N$, where $(X_t)_{0 \le t \le T}$ is an adapted Markov process, the conditional expectation involved in~\eqref{eq:policy-iteration} simplifies into
\begin{equation}
  \label{eq:EMarkov}
 \E[Z_{\tau_{k+1}} | \cf_{T_k}]  = \E[Z_{\tau_{k+1}} | X_{T_k}] = \psi_k(X_{T_k})
\end{equation}
where $\psi_k$ solves the following minimization problem
\begin{equation*}
  \inf_{\psi \in L^2(\cl(X_{T_k}))} \E\left[\abs{Z_{\tau_{k+1}} - \psi(X_{T_k})}^2\right]
\end{equation*}
with $L^2(\cl(X_{T_k}))$ being the set of all measurable functions $f$ such that $\E[f(X_{T_k})^2] < \infty$.
The real challenge comes from properly approximating the space $L^2(\cl(X_{T_k}))$ by a finite dimensional vector space: one typically uses polynomials or local bases. In both cases, to ensure a decent accuracy, the dimension of the approximation of $L^2(\cl(X_{T_k}))$ increases exponentially fast with the dimension of $X$. When $X$ is a high dimensional process, high performance computing can help but it is well known that solving the least square problem does not scale well and then deteriorates the efficiency of the parallel implementation, see for instance~\cite{Pa11GPU,PaPi16}. \\

\JL{In this work, we target \emph{truly} path dependent options, i.e. options for which the payoff cannot be written as a function of a Markov process $X$ with reasonable size. In this case, \eqref{eq:EMarkov} does not hold anymore and computing the conditional expectation knowing $\cf_{t_k}$ becomes really challenging. The new idea proposed in this work consists in computing an approximation of $Z_{T_{k+1}}$ for which the conditional expectation knowing $\cf_{T_k}$ is known in a closed form. This will be achieved by using Wiener chaos expansion. Then, we rely on the orthogonality of the chaos expansion to introduce a high degree of parallelism in the algorithm.}

In Section~\ref{sec:wiener}, we briefly recall the general ideas sustaining Wiener chaos expansion and how it can be used to approximate conditional expectations. Then, we present our algorithm in Section~\ref{sec:ls-chaos} and explain how to efficiently implement it in parallel. Section~\ref{sec:cv} is devoted to the study of the convergence of the algorithm. We conclude with some numerical experiments in Section~\ref{sec:numerics}, which emphasize the impressive scalability of the the parallel implementation and the efficiency of the algorithm for some complex path dependent options.

\section*{Notation}

In this section, we gather some extensively used notation in the paper
\begin{itemize}
  \item For $\alpha \in \N^d$, $\abs{\alpha}_1 = \sum_{i=1}^d \alpha_i$. Similarly, for $\alpha \in (\N^n)^d$, $\abs{\alpha}_1 = \sum_{j=1}^d \sum_{i=1}^n \alpha_i^j$.
  \item For $\alpha \in \N^d$, $\alpha! = \prod_{i=1}^d \alpha_i!$. Similarly, for $\alpha \in (\N^n)^d$, $\alpha! = \prod_{j=1}^d \prod_{i=1}^n \alpha_i^j!$.
  \item For $d, n, p \in \N$, we define the set of multi-indices with total degree smaller than $p$ by \begin{align*}
    A^{\otimes d}_{p, n} & = \left\{ \alpha \in (\N^{n})^d\; : \abs{\alpha}_1 \le p
  \right\}
  \end{align*}
  \item For $d, n, p \in \N$, and $k \le n$  we define the set of multi-indices with total degree smaller than $p$ and no degree after $k$ by
  \begin{align*}
    \spaceIndex[k] = \left\{ \alpha \in \spaceIndex \; :
    \; \forall j \in \{1,\dots, d\}, \, \forall i > k, \; \alpha^j_i = 0 \right\}.
  \end{align*}
  \item For $i \in \N$, $H_i$ denote the $i-th$ Hermite polynomial.
  \item For $\alpha \in {(\N^n)}^d$, $x_1, \dots, x_n \in \R^d$, the multi-variate Hermite polynomials write
  \[
  H^{\otimes d}_\alpha (x_1, \dots, x_n) =
  \prod_{j=1}^d \prod_{i=1}^n H_{\alpha^j_i} (x^j_i).
  \]
\end{itemize}

\section{Wiener chaos expansion}
\label{sec:wiener}

\subsection{General framework}

In this section, we briefly recall the principles of Wiener chaos expansion and its basic properties. We refer to~\cite{nualart_98} for theoretical details. \\

Let $H_i$ be the $i-th$ Hermite polynomial defined by
\begin{align*}
  H_0(x) = 1; \qquad
  H_i(x) = (-1)^i \expp{x^2/2} \frac{d^i}{dx^i} (\expp{-x^2/2}),
  \mbox{ for } i \ge 1.
\end{align*}
They satisfy for all integer $i$, $H_i' = H_{i-1}$ with the convention $H_{-1} = 0$.
We recall that if $(X,Y)$ is a standard random normal vector in $\R^2$, $\E[H_i(X) H_j(Y)]  = i!\left( \E[XY] \right)^i \; \ind{i = j}$.

It is well-known that every square integrable $\cf_T$-measurable random variable $F$
admits the following orthonormal decomposition
\begin{align*}
  F = \E[F] + \sum_{\alpha \in (\N^\N)^d} \lambda_\alpha
  \prod_{j=1}^d \prod_{i \ge 1} H_{\alpha^j_i}\left(\int_0^T \eta^j_i(t) dB^j_t\right)
\end{align*}
where $\left((\eta^j_i)_{ 1 \le j \le d}\right)_{i \ge 1}$ is an orthonormal basis of $L^2([0,T], \R^d)$. We denote by $L_1^2([0,T], \R^d)$ the set of functions $f = (f_1, \dots, f_d) \in L^2([0,T], \R^d)$ such that for all $1 \le i \le d$, $\int_0^T f_i^2(t) dt = 1$.
For all $p \ge 0$, we define the Wiener chaos of order $p$ by
\begin{align*}
  \ch_p = \ClosedSpan \left\{ \prod_{j=1}^d H_{p_j}\left(\int_0^T f^j_t dB^j_t\right)\; : \; f
  \in L_1^2([0,T], \R^d), \; \sum_{j=1}^d p_j = p \right\}.
\end{align*}
We denote the projection of a random variable $F \in L^2(\cf_T)$ onto $\displaystyle
\bigoplus_{\ell=0}^p \ch_\ell$ by $C_p(F)$. Note that the spaces $\ch_\ell$ are orthogonal to each other thanks to the properties of the Hermite polynomials.

Consider the indicator functions of the grid defined by $0= t_0 < t_1 < \dots < t_n =
T$ with values in $\R^d$ defined by
\begin{align*}
  f_i^j(t) = \frac{\ind{]t_{i-1}, t_i]}(t)}{\sqrt{t_i - t_{i-1}}} \ee_j,
  \; i=1,\dots,n, \; j=1,\dots,d
\end{align*}
where $(\ee_1,\dots, \ee_d)$ denotes the canonical basis of $\R^d$.
Based on the definition of $\ch_p$, we introduce the \textit{truncated} Wiener chaos of order up to $p$
\begin{align*}
  \cc_{p,n} = \Span \left\{ H^{\otimes d}_{\alpha}(G_1, \dots, G_n) \; : \;
    \alpha \in (\N^n)^d, \, \abs{\alpha}_1 \le p \right\}
\end{align*}
where
\[
  H^{\otimes d}_\alpha (G_1, \dots, G_n) =
  \prod_{j=1}^d \prod_{i=1}^n H_{\alpha^j_i} (G^j_i) \quad \mbox{with}\quad  G^j_i = \frac{B^j_{t_i} - B^j_{t_{i-1}}}{\sqrt{t_i - t_{i-1}}}.
\]

From the orthogonality of the Hermite polynomials, we immediately deduce the following result.
\begin{proposition}
  \label{prop:chaos-trunc}
  Let $F$ be a real valued random variable in $L^2(\Omega, \cf_T, \P)$. Its $L^2$ projection onto $\cc_{p,n}$ writes
  \begin{align*}
    C_{p,n}(F) = \sum_{\alpha \in A^{\otimes d}_{p,n}} \lambda_\alpha
    H_{\alpha}^{\otimes d} (G_1, \dots, G_n)
  \end{align*}
  where
  \begin{align*}
    A^{\otimes d}_{p, n} & = \left\{ \alpha \in (\N^{n})^d\; : \abs{\alpha}_1 \le p
  \right\}
  \end{align*} and the coefficients $\lambda_\alpha$ are obtained as a dot product
  \begin{align}
    \label{eq:lambdaa}
    \lambda_\alpha = \inv{\alpha!} \, \E[F H_{\alpha}^{\otimes d} (G_1, \dots, G_n)].
  \end{align}
\end{proposition}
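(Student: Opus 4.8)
The plan is to recognize the spanning family $\{H^{\otimes d}_\alpha(G_1,\dots,G_n) : \alpha \in A^{\otimes d}_{p,n}\}$ as an \emph{orthogonal} basis of the finite dimensional space $\cc_{p,n}$ and then to apply the elementary formula giving the orthogonal projection onto a finite dimensional subspace along such a basis. First I would observe that, since $\cc_{p,n}$ is finite dimensional (the index set $A^{\otimes d}_{p,n}$ is finite), the $L^2$ projection $C_{p,n}(F)$ is well defined for any $F \in L^2(\Omega, \cf_T, \P)$, and that each generator $H^{\otimes d}_\alpha(G_1, \dots, G_n)$ lies in $L^2$ because it is a polynomial in finitely many Gaussian variables, hence in every $L^q$; in particular the dot products $\E[F H^{\otimes d}_\alpha(G_1, \dots, G_n)]$ are finite by Cauchy--Schwarz.

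The core of the argument is the orthogonality computation. The key structural fact is that the family $(G^j_i)_{1 \le i \le n, 1 \le j \le d}$ consists of i.i.d. standard normal variables: the increments $B^j_{t_i} - B^j_{t_{i-1}}$ over the disjoint intervals $]t_{i-1}, t_i]$ are independent, the normalization by $\sqrt{t_i - t_{i-1}}$ makes each $G^j_i$ standard, and the coordinates $B^1, \dots, B^d$ of the Brownian motion are mutually independent. Granting this, I would compute for $\alpha, \beta \in A^{\otimes d}_{p,n}$, using the product structure of $H^{\otimes d}_\alpha$ and independence,
\begin{align*}
  \E\left[ H^{\otimes d}_\alpha(G_1, \dots, G_n) H^{\otimes d}_\beta(G_1, \dots, G_n) \right]
  = \prod_{j=1}^d \prod_{i=1}^n \E\left[ H_{\alpha^j_i}(G^j_i) H_{\beta^j_i}(G^j_i) \right].
\end{align*}
Each factor is evaluated through the recalled identity $\E[H_i(X) H_j(Y)] = i!(\E[XY])^i \ind{i=j}$ taken with $X = Y = G^j_i$, which gives $\E[H_{\alpha^j_i}(G^j_i) H_{\beta^j_i}(G^j_i)] = \alpha^j_i! \,\ind{\alpha^j_i = \beta^j_i}$. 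The product therefore equals $\alpha!$ when $\alpha = \beta$ and $0$ otherwise, proving both the pairwise orthogonality and $\norm{H^{\otimes d}_\alpha(G_1, \dots, G_n)}^2_{L^2} = \alpha!$.

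Since the norms $\alpha!$ are strictly positive, none of the generators vanishes and, being pairwise orthogonal, they are linearly independent; hence they form an orthogonal basis of $\cc_{p,n}$. To conclude I would invoke the standard expression of the orthogonal projection onto a finite dimensional subspace equipped with an orthogonal basis $(e_\alpha)$, namely $C_{p,n}(F) = \sum_\alpha \frac{\E[F e_\alpha]}{\E[e_\alpha^2]} e_\alpha$, and substitute $e_\alpha = H^{\otimes d}_\alpha(G_1, \dots, G_n)$ together with $\E[e_\alpha^2] = \alpha!$ to obtain exactly $\lambda_\alpha = \inv{\alpha!} \E[F H^{\otimes d}_\alpha(G_1, \dots, G_n)]$. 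I do not anticipate a genuine obstacle here: the only point requiring care is the i.i.d.\ structure of the $(G^j_i)$ underlying the factorization of the expectation, everything else being a direct consequence of the quoted Hermite identity and the geometry of projections in a Hilbert space.
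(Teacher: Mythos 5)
Your proof is correct and follows exactly the route the paper intends: the paper states this proposition as an immediate consequence of the orthogonality of the Hermite polynomials, and your argument simply makes that explicit by checking that the $(G^j_i)$ are i.i.d.\ standard Gaussians, deducing $\E\left[ H^{\otimes d}_\alpha(G) H^{\otimes d}_\beta(G) \right] = \alpha!\,\ind{\alpha = \beta}$ from the quoted identity, and applying the standard orthogonal projection formula. No gaps; your elaboration is a faithful filling-in of the details the paper treats as immediate.
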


The random variable $C_{p,n}(F)$ is called the truncated chaos expansion of order $p$ of  the random variable $F$. With an obvious abuse of notation, we write, for  $\lambda \in \R^{A^{\otimes d}_{p, n}}$,
\begin{align}
  \label{eq:chaos-pd}
  C_{p,n}(\lambda) = \sum_{\alpha \in A^{\otimes d}_{p,n}} \lambda_\alpha
  H_{\alpha}^{\otimes d} (G_1, \dots, G_n).
\end{align}
We recall the main result concerning the convergence of the truncated chaos expansion (see Theorem~1.1.1 and Proposition~1.1.1 of~\cite{nualart_98})
\begin{proposition}
  \label{prop:chaos-cv}
  Let $F$ be a real valued random variable in $L^2(\Omega, \cf_T, \P)$. Then, $C_{p,n}(F)$ converges to $F$ in $L^2(\Omega, \cf_T, \P)$ when both $p$ and $n$ go to infinity.
\end{proposition}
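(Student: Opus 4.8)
The plan is to read $C_{p,n}(F)$ as an orthogonal projection and to split the convergence into two independent limits: refining the chaos order $p$ at a fixed grid, and then refining the grid, i.e. letting $n \to \infty$ with the mesh tending to zero. The bridge between the two steps is the conditional expectation $\E[F \mid \cg_n]$, where $\cg_n = \sigma(G_1, \dots, G_n) = \sigma(B_{t_1}, \dots, B_{t_n})$.

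First I would fix the grid and let $p \to \infty$. The increments $G_1, \dots, G_n$ are independent standard Gaussian vectors in $\R^d$, so $\cg_n$ carries the Gaussian law $\gamma^{\otimes nd}$. The classical fact that the Hermite polynomials form a complete orthogonal system of $L^2(\R, \gamma)$ tensorizes into the statement that $\{ H^{\otimes d}_\alpha(G_1, \dots, G_n) : \alpha \in (\N^n)^d \}$ is an orthogonal basis of $L^2(\Omega, \cg_n, \P)$, with $\E[(H^{\otimes d}_\alpha)^2] = \alpha!$ by the orthogonality relation recalled before the statement. Since each $H^{\otimes d}_\alpha(G_1, \dots, G_n)$ is $\cg_n$-measurable, the coefficients $\lambda_\alpha$ of Proposition~\ref{prop:chaos-trunc} are exactly the Fourier coefficients of $\E[F \mid \cg_n]$ in this basis, and $\cc_{p,n}$ is precisely the span of those basis vectors with $\abs{\alpha}_1 \le p$. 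Parseval's identity then gives
\begin{align*}
  \norm{\E[F \mid \cg_n] - C_{p,n}(F)}_{L^2}^2 = \sum_{\alpha \in (\N^n)^d, \, \abs{\alpha}_1 > p} \alpha! \, \lambda_\alpha^2,
\end{align*}
which is the tail of a convergent series and hence tends to $0$ as $p \to \infty$. Thus $C_{p,n}(F) \to \E[F \mid \cg_n]$ in $L^2$ for each fixed $n$.

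Second I would refine the grid. Choosing the grids nested with mesh tending to $0$, the sigma-algebras $\cg_n = \sigma(B_{t_1}, \dots, B_{t_n})$ form an increasing sequence whose union generates $\cf_T$: this uses the continuity of $B$, which lets one recover $B_t$ for every $t \in [0,T]$ as a limit of values read on the grids, together with the fact that $\cf_T$ is the augmented natural filtration of $B$. L\'evy's upward convergence theorem then yields $\E[F \mid \cg_n] \to \E[F \mid \cf_T] = F$ in $L^2$, since $F$ is $\cf_T$-measurable.

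It remains to combine the two limits. Given $\varepsilon > 0$, I would first pick $n$ so that $\norm{F - \E[F \mid \cg_n]}_{L^2} < \varepsilon/2$, and then, for that fixed $n$, pick $p$ so that $\norm{\E[F \mid \cg_n] - C_{p,n}(F)}_{L^2} < \varepsilon/2$; the triangle inequality closes the argument. The main obstacle is really the second step: one must argue carefully that the grid sigma-algebras increase to the full Brownian filtration $\cf_T$ (and not merely to a strict sub-sigma-algebra), which is where the continuity of $B$ and the augmentation of the filtration are essential. The first step is a clean consequence of the completeness of the Hermite system, and everything else is bookkeeping with the orthogonality relations already recalled.
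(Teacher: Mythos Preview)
The paper does not actually prove this proposition: it is stated as a recall of a classical result, with a pointer to Theorem~1.1.1 and Proposition~1.1.1 of Nualart's book. So there is no ``paper's own proof'' to compare with; your argument stands on its own.

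Your two-step decomposition is correct and is exactly the standard route to this result. The identification of $C_{p,n}(F)$ as the orthogonal projection of $\E[F\mid\cg_n]$ onto the span of $\{H_\alpha^{\otimes d}(G):\abs{\alpha}_1\le p\}$ is right, since $H_\alpha^{\otimes d}(G)$ is $\cg_n$-measurable and hence $\E[F\,H_\alpha^{\otimes d}(G)]=\E[\E[F\mid\cg_n]\,H_\alpha^{\otimes d}(G)]$; completeness of the Hermite system in $L^2$ of the Gaussian measure then gives the first limit. The second step via L\'evy's upward theorem is also fine. Two small remarks: (i) you implicitly assume the grids are nested so that $(\cg_n)_n$ is increasing --- the paper only imposes this later (Section~\ref{sec:ls-chaos}) together with the mesh condition, so it is worth stating it as a hypothesis here; (ii) the augmentation issue is harmless in $L^2$: since $\cf_T$ is the completion of $\sigma(B_t,\,t\le T)$, any $\cf_T$-measurable $F$ coincides a.s.\ with a $\sigma(B_t,\,t\le T)$-measurable random variable, and L\'evy's theorem applies to the latter. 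With those caveats, your proof is complete.
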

The space of truncated Wiener chaos $\cc_{p,n}$ has the key property of being stable by the conditional expectation operator. More precisely, the following result explains how to compute, in a closed form, the conditional expectation of an element of $\cc_{p,n}$.
\begin{proposition}
  \label{prop:Et-chaos-d}
  Let $F$ be a real valued random variable in $L^2(\Omega, \cf_T, \P)$ and let $k
  \in \{1, \dots, n\}$ and $p \ge 0$
  \begin{align*}
    \E[C_{p,n}(F) | \cf_{t_k}] = \sum_{\alpha \in \spaceIndex[k]} \lambda_\alpha \;
    H^{\otimes d}_\alpha (G_1, \dots, G_{n})
  \end{align*}
  where $\spaceIndex[k]$ is the set of multi-indices vanishing after time $t_k$
  \begin{align*}
    \spaceIndex[k] = \left\{ \alpha \in \spaceIndex \; :
    \; \forall j \in \{1,\dots, d\}, \, \forall i > k, \; \alpha^j_i = 0 \right\}.
  \end{align*}
\end{proposition}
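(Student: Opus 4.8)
The plan is to exploit two structural properties of the normalized increments $G^j_i = (B^j_{t_i} - B^j_{t_{i-1}})/\sqrt{t_i - t_{i-1}}$: those with $i \le k$ are $\cf_{t_k}$-measurable, whereas those with $i > k$ are independent of $\cf_{t_k}$. Since $C_{p,n}(F) \in L^2(\cf_T)$ its conditional expectation is well defined, and by linearity it is enough to compute $\E[H^{\otimes d}_\alpha(G_1, \dots, G_n) \cond \cf_{t_k}]$ for a single multi-index $\alpha \in \spaceIndex$ and then sum against the coefficients $\lambda_\alpha$. Concretely, I would aim to show
\begin{align*}
  \E\left[H^{\otimes d}_\alpha(G_1, \dots, G_n) \cond \cf_{t_k}\right] = \ind{\alpha \in \spaceIndex[k]} \, H^{\otimes d}_\alpha(G_1, \dots, G_n),
\end{align*}
after which summing over $\alpha \in \spaceIndex$ and discarding the vanishing terms yields exactly the claimed expression.

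First I would factor the product defining $H^{\otimes d}_\alpha$ across the index $k$, writing it as $A_\alpha \, B_\alpha$ with $A_\alpha = \prod_{j=1}^d \prod_{i=1}^k H_{\alpha^j_i}(G^j_i)$ and $B_\alpha = \prod_{j=1}^d \prod_{i=k+1}^n H_{\alpha^j_i}(G^j_i)$. The factor $A_\alpha$ is $\cf_{t_k}$-measurable and the factor $B_\alpha$ is independent of $\cf_{t_k}$, so pulling $A_\alpha$ out and replacing the conditional expectation of the independent factor by its expectation gives $\E[H^{\otimes d}_\alpha \cond \cf_{t_k}] = A_\alpha \, \E[B_\alpha]$. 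Because the $G^j_i$ are independent standard Gaussians, $\E[B_\alpha]$ factorizes, and the scalar identity $\E[H_m(G)] = \ind{m=0}$ (the case $j=0$ of the quoted relation $\E[H_i(X)H_j(Y)] = i!(\E[XY])^i \ind{i=j}$) shows that each factor equals $\ind{\alpha^j_i = 0}$. Hence $\E[B_\alpha] = 1$ when $\alpha^j_i = 0$ for all $j$ and all $i > k$, that is when $\alpha \in \spaceIndex[k]$, and $\E[B_\alpha] = 0$ otherwise. In the former case the factors with $i > k$ in $A_\alpha$ are trivial since $H_0 = 1$, so $A_\alpha = H^{\otimes d}_\alpha(G_1, \dots, G_n)$, which closes the computation.

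The only genuinely delicate point is the measurability/independence claim underpinning the first step, which I would justify from the independence of Brownian increments over disjoint intervals together with the fact that $(\cf_t)$ is the augmented natural filtration of $B$: the increments over $]t_{i-1}, t_i]$ with $i \le k$ are $\cf_{t_k}$-measurable, while the whole family of increments over the intervals $]t_{i-1}, t_i]$ with $i > k$, all lying to the right of $t_k$, is jointly independent of $\cf_{t_k}$. Everything past that reduction is a routine factorization of expectations, so I do not anticipate further obstacles.
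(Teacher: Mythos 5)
Your proof is correct and follows essentially the same route as the paper's: split each $H^{\otimes d}_\alpha(G_1,\dots,G_n)$ at index $k$ into an $\cf_{t_k}$-measurable factor and an independent factor, replace the conditional expectation of the latter by its expectation, and use $\E[H_m(G^j_i)] = \ind{m=0}$ to kill every multi-index outside $\spaceIndex[k]$. The only difference is presentational — you work term by term and state the indicator identity explicitly, while the paper takes the conditional expectation of the whole sum at once — but the mathematical content is identical.
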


\begin{proof}
  Taking the conditional expectation in~\eqref{eq:chaos-pd} leads to
  \begin{align}
    \label{eq:Et-proof}
    \E[C_{p,n} (F) | \cf_{t_k}] = \sum_{\alpha \in A_{p,n}^{\otimes d}} \lambda_\alpha
    \left(\prod_{i=1}^k \prod_{j=1}^d H_{\alpha^j_i} (G^j_i)\right) \E\left[ \prod_{i=k+1}^n \prod_{j=1}^d H_{\alpha^j_i} (G^j_i) \Big|\cf_{t_k} \right].
  \end{align}
  Since the Brownian increments after time $t_k$ are independent of $\cf_{t_k}$ and are
  independent of one another, $\E\left[ \prod_{i=k+1}^n \prod_{j=1}^d H_{\alpha^j_i}
  (G^j_i) \Big|\cf_{t_k} \right] = \prod_{i=k+1}^n \prod_{j=1}^d \E\left[ H_{\alpha^j_i}
  (G^j_i) \right]$, which is zero as soon as $\sum_{i=k+1}^n \sum_{j=1}^d \alpha_i^j > 0$. Hence, the sum in~\eqref{eq:Et-proof} reduces to the sum over the set of multi-indices $\alpha \in A_{p,n}^{\otimes d}$ such that $\alpha^j_i = 0$ for all $i>k$ and $1 \le j \le d$, which is exactly the definition of the set $\spaceIndex[k]$.
\end{proof}

Since the sum appearing in $\E[C_{p,n}(F) | \cf_{t_k}]$ is reduced to a sum over the set of multi-indices $\alpha \in \spaceIndex[k]$, it actually only depends on the first $k$ increments $(G_1, \dots, G_k)$. One can easily check that $\E[C_{p,n}(F) | \cf_{t_k}]$ is actually given by the chaos expansion of $F$ on the first $k$ Brownian increments. Hence, computing a conditional expectation simply boils down to dropping the non measurable terms. While it may look like a naive way to proceed, it is indeed correct in this setting. To denote the chaos expansion on the time grid $(t_0, \dots, t_n)$ truncated to the first $k$ increments, we introduce the notation
\begin{align}
  \label{eq:Et-chaos-d}
   C_{p,n|k}(F) = \sum_{\alpha \in \spaceIndex[k]} \lambda_\alpha \;
    H^{\otimes d}_\alpha (G_1, \dots, G_{n}) = \E[C_{p,n}(F) | \cf_{t_k}].
\end{align}
With an obvious abuse of notation, we write for $\lambda \in \spaceIndex[k]$,
\[
  C_{p,n|k}(\lambda) =\sum_{\alpha \in \spaceIndex[k]} \lambda_\alpha \;
  H^{\otimes d}_\alpha (G_1, \dots, G_{n}).
\]

\subsection{Application to the approximation of conditional expectations}
\label{sec:wiener-cond}

In this section, we explain how to use the truncated Wiener chaos expansion of a random variable $F \in L^2(\Omega, \cf_T, \P)$, to compute its conditional expectation.

Assume that we need $M$ samples of the conditional expectations. We sample $M$ paths $(B^{(m)}_{t_1}, \dots, B^{(m)}_{t_n}, F^{(m)})$ of $(B_{t_1},\dots, B_{t_n}, F)$ and thanks to Proposition~\ref{prop:Et-chaos-d} we approximate $\E[F | \cf_{t_k}]$ on the sample path with index $m$ by
\begin{align*}
  C_{p,n|k}^{(m)}(\hl^M) = \sum_{\alpha \in \spaceIndex[k]} \hl_\alpha^M \;
    H^{\otimes d}_\alpha (G_1^{(m)}, \dots, G_{k}^{(m)})
\end{align*}
where
\begin{align*}
  \hl_\alpha^M = \inv{M \alpha! } \sum_{\ell = 1}^M F^{(\ell)} H^{\otimes d}_\alpha (G_1^{(\ell)}, \dots, G_{k}^{(\ell)}).
\end{align*}
Using the strong law of large numbers, we clearly have that for every $\alpha \in \spaceIndex[k]$, $\hl^M_\alpha$ converges a.s. to $\lambda_\alpha$ when $M$ goes to infinity. Then, we deduce that for any fixed $m$, $C^{(m)}_{p,n|k}(\hl^M)$ converges almost surely to  $C_{p,n|k}^{(m)}(\lambda)$ when $M \to \infty$.

\begin{remark}
  \label{rem:independent-samples}
 Note that we use the same samples to compute the coefficients of the chaos expansion $\hl^M_\alpha$ and to approximate $C^{(m)}_{p,n|k}$. It could have been possible to use different set of samples for the two parts and would have even simplified the theoretical analysis of the algorithm but the price to pay in terms of computational time is prohibitive. Using independent sets of samples would require to simulate new samples of the whole path at each date $T_k$.
\end{remark}

\section{The algorithm}
\label{sec:ls-chaos}

\subsection{Description of the algorithm}

We aim at solving the dynamic programming equation~\eqref{eq:policy-iteration} to obtain $\tau_1$. Then, the time$-0$ price of the Bermudan option writes
\begin{equation*}
  U_0 = \max(Z_0, \E[Z_{\tau_1}]).
\end{equation*}

For all $n \ge N$, consider a time grid $0 < t_0 < t_1 < \dots < t_n = T$ of $[0,T]$, such that $\{T_1, \dots, T_N\} \subset \{t_1, \dots, t_n\}$. We assume that $\lim_{n \to \infty} \sup_{0 \le k \le n-1} \abs{t_{k+1} - t_k} = 0$. For $k \le N$, we define $\sigma_k \in \N$ such that
\begin{align*}
  t_{\sigma_k} = T_k.
\end{align*}
Even though, we do not make the dependency on $n$ explicit, it is clear that $\sigma_k$ is an increasing function of $n$. \\

Now, we introduce some successive approximations of~\eqref{eq:policy-iteration}. First, we replace the true conditional expectation $E[Z_{\tau_{k+1}} | \cf_{T_k}]$ by the conditional expectation of the truncated Wiener chaos expansion of $Z_{\tau_{k+1}}$
\begin{equation*}
  \begin{cases}
    \tau_N^{p,n} = T_N \\
    \tau_k^{p,n} = T_k \ind{Z_{T_k} \ge C_{p, n|\sigma_k}(\lambda_k)} + \tau_{k+1}^{p,n} \ind{Z_{T_k} < C_{p, n|\sigma_k}(\lambda_k)}, \; \text{for $1 \le k \le N-1$}
  \end{cases}
\end{equation*}
where the $\lambda_k$'s are the coefficients of the truncated expansion of $Z_{\tau_{k+1}^{p,n}}$
\begin{align*}
  \lambda_{k, \alpha} &= \inv{\alpha!} \; \E[Z_{\tau_{k+1}^{p,n}} H^{\otimes d}_\alpha (G_1, \dots, G_{\sigma_k})] \quad \mbox{for } \alpha \in \spaceIndex[\sigma_k]
\end{align*}
The standard approach is to sample a bunch of paths of the model $S^{(m)}_{T_0}, S^{(m)}_{T_1},\dots, S^{(m)}_{T_N}$ along with the corresponding payoff paths  $Z^{(m)}_{T_0}, Z^{(m)}_{T_1},\dots, Z^{(m)}_{T_N}$, for $m=1,\dots,M$.  We denote by $B^{(m)}$ the Brownian path used to sample $S^{(m)}_{T_0}, S^{(m)}_{T_1},\dots, S^{(m)}_{T_N}$. Note that $B$ is sampled on the finer grid $t_0, \dots, t_n$, which enables us to deal with model discretization issues. The vector $G^{(m)}_1, .\dots, G^{(m)}_{n}$ corresponds to the increments of the Brownian motion $B$ on the finer time grid. To compute the $\tau_k$'s on each path, one needs to compute the conditional expectations $\E[Z_{\tau_{k+1}} | \cf_{T_k}]$ for $k=1,\dots,N-1$. Then, we introduce the final approximation of the backward iteration policy, in which the truncated chaos expansion is computed using a Monte Carlo approximation
\begin{equation*}
  \begin{cases}
    \htau_N^{p,n, (m)} = T_N \\
    \htau_k^{p,n, (m)} = T_k \ind{Z_{T_k}^{(m)} \ge C_{p, n|\sigma_k}^{(m)}(\hl_k^M)} + \htau_{k+1}^{p,n,(m)} \ind{Z_{T_k}^{(m)} < C_{p, n|\sigma_k}^{(m)}(\hl_k^M)}, \; \text{for $1 \le k \le N-1$}
  \end{cases}
\end{equation*}
where the $\hl_k^M$ are computed as described in Section~\ref{sec:wiener-cond}. For $k=1,\dots,N-1$, the vector $\hl_k^M$ is an element of $\R^{A_{p,n}^{\otimes d, \sigma_k}}$ and for every $\alpha \in A_{p,n}^{\otimes d, \sigma_k}$,
\begin{align}
  \label{eq:lambdaM}
  \hl_{k,\alpha}^M = \inv{M \alpha! } \sum_{\ell = 1}^M Z_{\htau_{k+1}^{p,n,(\ell)}}^{(\ell)} H^{\otimes d}_\alpha (G^{(\ell)}).
\end{align}
Then, we finally approximate the time$-0$ price of the option by
\begin{align}
  \label{eq:price-mc}
  U_0^{p,n,M} = \max\left(Z_0, \inv{M} \sum_{m=1}^M Z^{(m)}_{\htau_1^{p,n, (m)} }\right).
\end{align}
The pseudo code of our approach corresponds to Algorithm~\ref{algo-ls-chaos}.
\begin{algorithm2e}[ht]
  Generate $(G^{(1)}, Z^{(1)}), \dots, (G^{(M)}, Z^{(M)})$ $M$ i.i.d. samples following the law of $(Z_{t_i}, G_{t_i})_{1 \le i \le N}$ \;
  $\htau_N^{p,n,(m)} \gets T$ for all $m=1,\dots,M$ \;

  \For{$k=N-1,\dots,1$}{\label{a:outer}
    \For{$\alpha \in \spaceIndex[\sigma_k]$}{\label{a:expansion}
      $$
      \hl_{k,\alpha}^M = \inv{M \alpha! } \sum_{\ell = 1}^M Z_{\htau_{k+1}^{p,n,(\ell)}}^{(\ell)} H^{\otimes d}_\alpha (G^{(\ell)})
      $$
    }
    \For{$m=1,\dots,M$}{
      \begin{equation*}
          \htau_k^{p,n, (m)} = T_k \ind{Z_{T_k}^{(m)} \ge C_{p, n|\sigma_k}^{(m)}(\hl_k^M)} + \htau_{k+1}^{p,n,(m)} \ind{Z_{T_k}^{(m)} < C_{p, n|\sigma_k}^{(m)}(\hl_k^M)}
      \end{equation*}
    }
  }
  \begin{align*}
    U_0^{p,n,M} = \max\left(Z_0, \inv{M} \sum_{m=1}^M Z^{(m)}_{\htau_1^{p,n, (m)} }\right)
  \end{align*}
  \caption{Dynamic programming principle using Wiener chaos expansion}
  \label{algo-ls-chaos}
\end{algorithm2e}

\begin{remark}\label{rem:in-the-money}
  From a practical point of view, we advise to consider in the money paths in the chaos expansion as it was already noticed in~\cite{LS01}. Hence, the set $\{Z_{T_k}^{(m)} \ge C_{p, n|\sigma_k}^{(m)}(\hl_k^M)\}$ is replaced by $\{Z_{T_k}^{(m)}  > 0\} \cup \{Z_{T_k}^{(m)} \ge C_{p, n|\sigma_k}^{(m)}(\hl_k^M)\}$ and the coefficients of the chaos expansion are given by
  \[
  \hl_{k,\alpha}^M = \inv{M \alpha! } \sum_{\ell = 1}^M Z_{\htau_{k+1}^{p,n,(\ell)}}^{(\ell)} \ind{Z_{T_k}^{(\ell)}  > 0} H^{\otimes d}_\alpha (G^{(\ell)}).
  \]
  This modification does not change the theoretical analysis of the algorithm but improves its numerical behavior.
\end{remark}

Our algorithm is designed as a black box taking as inputs simulations of the Brownian motion and the corresponding payoff process. From a practical point of view, you can design the implementation in such as way that pricing a new product simply amounts to implementing the discretization of the model and the computation of the payoff.

\subsection{Comments on the algorithm}

The obvious and generic way to deal with truly path-dependent options or non Markovian model using the standard Longstaff Schwartz algorithm would be to consider the whole path as a regressor. It is very much unlikely that one can easily build a set of basis functions which are orthogonal for the law of the discretized path process. Hence, the regression problem would grow exponentially fast and as explained in~\cite{benguigui2012}, parallelism would not help much. Going beyond the Markovian setting requires an orthogonality property, which turns the regression problem into a series of independent inner-products. Of course, it is always possible to pretend everything is Markovian, but then you have no guaranty on the error you are making. \\

Our algorithm may be related to a \emph{regress later} method as investigated by~\cite{GlaYu04,BP18}. At time $T_k$, a regress later approach is typically composed of two steps: first $Z_{\tau_{k+1}}$ is decomposed on a set of $\cf_{T_{k+1}}$ measurable basis functions, which looks like a least squares approximation of the conditional expectation with respect to $\cf_{T_{k+1}}$.   Then, the conditional expectation of each basis function is computed analytically to obtain an approximation of $\E[Z_{\tau_{k+1}} | \cf_{T_k}]$. Our algorithm can also be seen as a two stage method: first we compute the chaos expansion of $Z_{\tau_{k+1}}$ and then we compute its conditional expectation. Although this way of formulating the algorithm is mathematically correct, it would be totally inefficient to implement it this way. As a matter of fact, taking the conditional expectation of a Wiener chaos expansion simply amounts to dropping non measurable terms and because every coefficient is computed on its own using an inner product, we can directly compute the conditional expectation of the chaos expansion by actually computing a chaos expansion with respect to the Brownian increments up to time $T_k$ only. This more pragmatic way of understanding our algorithm makes it actually closer to a \emph{regress now} approach.

\JL{Closely looking at Algorithm~\ref{algo-ls-chaos}, it is clear that the central part of the algorithm is the computation of the chaos expansion. Conveniently implementing this step plays a major role in the efficiency of the algorithm. In our C++ implementation, the chaos expansion is performed using the generic multivariate polynomial toolbox from~\cite{pnl}}.

\subsection{Complexity analysis}

Most of the computational time is spent computing the coefficients of the chaos expansions. Remember that the cardinality of $\spaceIndex[k]$ is given by $\binom{\sigma_k d + p}{n \sigma_k} = \frac{(\sigma_k d + p) \cdots (\sigma_k d + 1)}{p!}$. As the optimal policy is only updated on the in-the-money paths at each time step (see Remark~\ref{rem:in-the-money}), the complexity of iteration $k$ of the loop on line~\ref{a:outer} of Algorithm~\ref{algo-ls-chaos} is proportional to
\[
  \sharp\{\text{in-the-money paths at time } T_k\} \times \binom{\sigma_k d + p}{n \sigma_k}.
\]
It is worth noting that the complexity decreases when time decreases.
The order $p$ of the expansion plays a major role in the computational time of the algorithm. So, when the order of the expansion increases from $p$ to $p+1$, the computational time is multiplied by $\frac{\sigma_k d + p + 1}{p + 1}$.

\subsection{The parallel implementation}

\JL{The key computational trick of our algorithm is that the chaos coefficients $\lambda$ are written as independent expectations and can therefore be parallelized both across $\alpha$ and the number $M$ of Monte Carlo samples. Simply put, Algorithm~\ref{algo-ls-chaos} can be reduced to computing several independent Monte Carlo averages and is therefore very well suited for parallel programming.}  For a fixed time $T_k$, there are two ways of introducing parallelism.
\begin{enumerate}[(i)]
  \item The coefficients of the truncated Wiener chaos expansion can be computed in parallel. For two multi-indices $\alpha, \beta \in \spaceIndex[\sigma_k]$, the computations of $\hl^M_{k,\alpha}$ and $\hl^M_{k, \beta}$ are independent and can therefore be carried out simultaneously. The update of all the $\htau_k^{p,n,(m)}$ can also be performed in parallel. This approach looks very promising provided that the cardinality of $\spaceIndex[\sigma_k]$ is large enough, at least larger than the number of available computing resources. Note that
  \[
    \# \spaceIndex[\sigma_k] = \binom{\sigma_k \, d \;+\; p}{\sigma_k \, d}
  \]
  where we recall that $\sigma_k \to 0$ when $k \to 0$. This approach will be efficient for large enough $k$ but will inevitably fail to scale when $k$ decreases, ie for smaller dates.

  \item \label{para} Alternatively, we can use the number of Monte Carlo samples as the leverage for parallelism. Since the number of samples remains fixed during the whole algorithm, the parallelism will be as efficient for large $k$ as for small ones. Assume we have $R$ computing resources at our disposal, then each resource handles $M_R = M/R$ sample paths and runs the sequential algorithm~\ref{algo-ls-chaos} on these paths except that at each time step, a reduction followed by a broadcast are done right before updating the $\htau_k^{p,n,(m)}$, $m=1,\dots,M$. In this way, the chaos expansions are computed using the $M$ paths. We precisely describe this parallel algorithm in Algorithm~\ref{algo-ls-chaos-parallel}.
\end{enumerate}
We have followed the approach~\eqref{para} for our parallel implementation to make sure all the resources are always fully busy, which is the least requirement to ensure a decent scalability. The comparison of Algorithms~\ref{algo-ls-chaos} and~\ref{algo-ls-chaos-parallel} shows that the sequential and parallel algorithms differ very little. We even managed to merge the sequential and parallel implementations into a single code, which is hardly ever feasible especially when using MPI. Each computing resource samples a bunch of paths, on which it updates the optimal stopping policy and contributes to the computation of the $\hl^M_k$'s. \JL{At each time step, we compute an average (a \emph{reduction}) to get the value of the $\hl^M_k$'s and then we send (\emph{broadcast}) the coefficients to every resources. In practice, we actually use the \emph{AllReduce\footnote{See \url{https://mpitutorial.com/tutorials/mpi-reduce-and-allreduce/} for an explanation of how reduce and broadcast can be efficiently coupled.}} method from MPI.}

It was noted in~\cite{benguigui2012,PaPi16}, that using mini-batches to introduce parallelism in least-square Monte Carlo was not convincingly efficient, mainly because the backward induction is essentially sequential. This is due to the regression step itself, which cannot be solved efficiently when Monte Carlo paths are allocated by blocks on each processor. In any parallel implementation of least-squares Monte Carlo, the regression step eventually becomes the bottleneck because of its bad scalability. To circumvent this main issue with least-square Monte Carlo, \cite{Pa11GPU} replaced the regression step by a quantization approach, which allows for natural parallelism. Similarly, \cite{gobet2016stratified} used stratification to introduce conditional independence between the samples used in each strata. It may look as a mini-batch approach while they have to use new stratified sampling at each time step because they work in a backward stochastic differential setting. Hence, interpreting their approach in terms of mini-batches is not straightforward. In our approach, we rely on the orthogonality of the chaos expansion to replace the regression step by inner products computed by Monte Carlo. Therefore, our approach naturally fits into the mini-batch paradigm with no extra cost.

\SetKwBlock{Parallel}{In parallel do}{end}
\SetKw{Broadcast}{Broadcast}
\SetKw{Reduce}{Reduce}
\begin{algorithm2e}[htp!]
  $M_R \gets M / R$\;
  \Parallel{
    Generate $(G^{(1)}, Z^{(1)}), \dots, (G^{(M_R)}, Z^{(M_R)})$ $M_R$ i.i.d. samples following the law of $(Z_{t_i}, G_{t_i})_{1 \le i \le N}$ \;
    $\htau_N^{p,n,(m)} \gets T$ for all $m=1,\dots,M_R$ \;

    \For{$k=N-1,\dots,1$}{
      \For{$\alpha \in \spaceIndex[\sigma_k]$}{
        $$
        \hl_{k,\alpha}^{M_R} = \inv{M_R \alpha! } \sum_{\ell = 1}^{M_R} Z_{\htau_{k+1}^{p,n,(m)}}^{(\ell)} H^{\otimes d}_\alpha (G^{(\ell)})
        $$
      }
      \Reduce the $\hl_{k,\alpha}^{M_R}$ to obtain $\hl_{k,\alpha}^{M}$ \;
      \Broadcast $\hl_{k,\alpha}^{M}$ for $\alpha \in \spaceIndex[\sigma_k]$ \;
      \For{$m=1,\dots,M_R$}{
        \begin{equation*}
            \htau_k^{p,n, (m)} = T_k \ind{Z_{T_k}^{(m)} \ge C_{p, n|\sigma_k}^{(m)}(\hl_k^{M})} + \htau_{k+1}^{p,n,(m)} \ind{Z_{T_k}^{(m)} < C_{p, n|\sigma_k}^{(m)}(\hl_k^M)}
        \end{equation*}
      }
    }
  \begin{align*}
    U_1^{p,n,M_R} = \inv{M_R} \sum_{m=1}^{M_R} Z^{(m)}_{\htau_1^{p,n, (m)} }
  \end{align*}
  }
  \Reduce the $U_1^{p,n,M_R}$ \;

  $U_0^{p,n,M} = \max\left(Z_0, U_1^{p,n}\right)$ \;

  \caption{Parallel algorithm for solving the dynamic programming principle using Wiener chaos expansion}
  \label{algo-ls-chaos-parallel}
\end{algorithm2e}

\section{Convergence of the algorithm}
\label{sec:cv}

In this section, we basically follow the lines of the methodology introduced in~\cite{clp02}. The statements of the convergence results are quite similar even if some assumptions had to be modified to match our framework, but the proofs differ to adapt to the new formulation of the regression step. 

There are two independent parts in this section. In Section~\ref{sec:cv-chaos}, we study the convergence of the algorithm with respect to the chaos expansion when all expectations are assumed to be computed exactly (no Monte Carlo approximation). In Section~\ref{sec:cv-mc}, we fix the order and the discretization used in the chaos expansion and we study the convergence with respect to the number of Monte Carlo samples. This is achieved by first proving that the Monte Carlo approximations of the chaos expansion at each time step converge to the true coefficients.

\subsection{Notation}

To avoid over expanding notation, we simply write $G$  instead of $(G_1, \dots, G_n)$ in the chaos expansions. At some points, it may be important to make precise which Brownian increments are used in the chaos expansion. To do so, we introduce the notation
\begin{align*}
  C_{p,n}(\lambda; G) = \sum_{\alpha \in A^{\otimes d}_{p,n}} \lambda_\alpha
  H_{\alpha}^{\otimes d} (G).
\end{align*}

 First, it is important to note that the paths $\tau_1^{p,n,(m)},\dots,\tau_N^{p,n,(m)}$ for $m=1,\dots,M$ are identically distributed but not independent since the Monte Carlo computation of the chaos expansion coefficients $\hl_k^M$ mixes all the paths. We define the vector $\Lambda$ of the coefficients of the successive expansions $\Lambda = (\lambda_1, \dots, \lambda_{N-1})$ and its Monte Carlo approximation $\hL^M = (\hl_1^M,.\dots,\hl_{N-1}^M)$.

Now, we recall the notation used by~\cite{clp02} to study the convergence of the original Longstaff Schwartz approach.\\
Given a parameter $\ell = (\ell_1, \dots, \ell_{N-1})$ in $\R^{\spaceIndex[\sigma_1]} \times \dots \times \R^{\spaceIndex[\sigma_{N-1}]}$ and vectors $z = z_1,\dots,z_N$ in $\R^N$ and $g = (g_1, \dots, g_n)$ in $(\R^d)^n$, we define the vector field $F= F_1, \dots, F_N$ by
\begin{align*}
  \begin{cases}
    F_N(\ell, z, g) &= z_N \\
    F_k(\ell, z, g) &= z_k \ind{z_k \ge C_{p,n|\sigma_k}(\ell; g)} + F_{k+1}(\ell, z, g) \ind{z_k < C_{p,n|\sigma_k}(\ell; g)}, \; \text{for $1 \le k \le N-1$}.
  \end{cases}
\end{align*}
Note that $F_k(\ell, z, x)$ does not depend on the first $k-1$ components of $\ell$, ie $\ell_1, \dots, \ell_{k-1}$. Moreover,
\begin{alignat*}{2}
  &F_k(\Lambda, Z, G) &&= Z_{\tau_k^{p,n}}, \\
  &F_k(\hL^M, Z^{(m)}, G^{(m)}) &&= Z^{(m)}_{\htau_k^{p,n, (m)}}.
\end{alignat*}
For $k=1, \dots, N$, we also define the functions $\phi_k : \R^{\spaceIndex[\sigma_1]} \times \dots \times \R^{\spaceIndex[\sigma_{N-1}]} \to \R$ and $\psi_k : \R^{\spaceIndex[\sigma_1]} \times \dots \times \R^{\spaceIndex[\sigma_{N-1}]} \to \R^{\spaceIndex[\sigma_{k}]}$ by
\begin{align*}
  \phi_k(\ell) = \E[F_k(\ell, Z, G)]  & \quad \text{and} \quad
  \psi_k(\ell) = \left(\E[F_k(\ell, Z, G) H_{\alpha}^{\otimes d}(G)]\right)_{\alpha \in \spaceIndex[\sigma_{k}]}.
\end{align*}
Note that $\phi_k$ and $\psi_k$ actually only depends on $\ell_k,\dots,\ell_{N-1}$ but not on the first $k-1$ components of $\ell$.

\subsection{Chaos approximation of conditional expectations}
\label{sec:cv-chaos}

\begin{proposition}
	\label{prop:cv-prix-pn}
  For all $k=1,\dots,N$, $\lim_{p,n \to \infty} \E[Z_{\tau^{p,n}_k} | \cf_{T_k}] = \E[Z_{\tau_k} | \cf_{T_k}]$ in $L^2(\Omega)$.
\end{proposition}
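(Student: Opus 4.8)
The plan is to argue by backward induction on $k$, directly at the level of conditional expectations, so that the induction hypothesis at step $k+1$ is exactly the statement to be proved there. Throughout I write $V_k = \E[Z_{\tau_{k+1}}\mid\cf_{T_k}]$ for the true continuation value, $V_k^{p,n} = C_{p,n|\sigma_k}(\lambda_k)$ for its chaos surrogate (which is $\cf_{T_k}$-measurable by construction), and $W_k^{p,n} = \E[Z_{\tau_{k+1}^{p,n}}\mid\cf_{T_k}]$. The base case $k=N$ is trivial since $\tau_N^{p,n}=\tau_N=T_N$. For the induction step I would first record the two closed forms obtained by pulling the $\cf_{T_k}$-measurable exercise indicators out of the conditional expectation:
\[
\E[Z_{\tau_k}\mid\cf_{T_k}] = \max(Z_{T_k}, V_k), \qquad
\E[Z_{\tau_k^{p,n}}\mid\cf_{T_k}] = Z_{T_k}\ind{Z_{T_k}\ge V_k^{p,n}} + W_k^{p,n}\ind{Z_{T_k}< V_k^{p,n}}.
\]

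Next I would split the difference of these two quantities as $D_1+D_2$, where $D_2=(W_k^{p,n}-V_k)\ind{Z_{T_k}<V_k^{p,n}}$ collects the error coming from the continuation payoff and $D_1=(Z_{T_k}-V_k)\big(\ind{Z_{T_k}\ge V_k^{p,n}}-\ind{Z_{T_k}\ge V_k}\big)$ collects the error coming from a wrong exercise decision. The decisive observation is that $D_1$ is supported on the event where the two rules disagree, and there $Z_{T_k}$ necessarily lies between $V_k$ and $V_k^{p,n}$; hence $|Z_{T_k}-V_k|\le|V_k^{p,n}-V_k|$ pointwise, so that $|D_1|\le|V_k^{p,n}-V_k|$. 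This is the step I expect to matter most: it shows the loss from a misclassified exercise decision is automatically second order, which is exactly what lets the proof go through without any boundary non-degeneracy (``no tie'') hypothesis $\P(Z_{T_k}=V_k)=0$ that a direct $L^2$ control of the stopped payoffs $Z_{\tau_k^{p,n}}$ would require.

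It then remains to show $\norm{V_k^{p,n}-V_k}_{L^2}\to0$ and $\norm{D_2}_{L^2}\to0$. For $D_2$, the tower property gives $W_k^{p,n}-V_k = \E[\,\E[Z_{\tau_{k+1}^{p,n}}\mid\cf_{T_{k+1}}]-\E[Z_{\tau_{k+1}}\mid\cf_{T_{k+1}}]\mid\cf_{T_k}]$, so by conditional Jensen $\norm{D_2}_{L^2}\le\norm{W_k^{p,n}-V_k}_{L^2}$ is bounded by the $L^2$ distance of the two level-$(k+1)$ conditional expectations, which tends to $0$ by the induction hypothesis. For $V_k^{p,n}$, I would use the identity $V_k^{p,n}=C_{p,n}(W_k^{p,n})$: since each Hermite factor $H_\alpha^{\otimes d}(G_1,\dots,G_{\sigma_k})$ entering $\lambda_{k,\alpha}$ is $\cf_{T_k}$-measurable, replacing $Z_{\tau_{k+1}^{p,n}}$ by its conditional expectation $W_k^{p,n}$ leaves the coefficients unchanged, while the coefficients attached to indices reaching beyond $t_{\sigma_k}$ vanish exactly as in the proof of Proposition~\ref{prop:Et-chaos-d}. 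Because $C_{p,n}$ is an orthogonal projection, hence an $L^2$-contraction, I get $\norm{V_k^{p,n}-V_k}_{L^2}\le\norm{W_k^{p,n}-V_k}_{L^2}+\norm{C_{p,n}(V_k)-V_k}_{L^2}$; the first term is again controlled by the induction hypothesis and the second tends to $0$ by Proposition~\ref{prop:chaos-cv} applied to the fixed variable $V_k\in L^2(\cf_{T_k})$, the joint limit $p,n\to\infty$ refining the grid.

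Collecting the bounds yields $\norm{\E[Z_{\tau_k^{p,n}}\mid\cf_{T_k}]-\E[Z_{\tau_k}\mid\cf_{T_k}]}_{L^2}\le \norm{V_k^{p,n}-V_k}_{L^2}+\norm{D_2}_{L^2}\to0$, which closes the induction. The only genuinely delicate points are the pointwise estimate on $D_1$, which is the heart of the argument, and the verification of $V_k^{p,n}=C_{p,n}(W_k^{p,n})$ together with the contraction property; everything else is conditional Jensen and the already-established chaos convergence.
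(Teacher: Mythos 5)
Your proof is correct and takes essentially the same route as the paper's: the same backward induction, the same decomposition of $\E[Z_{\tau_k^{p,n}}-Z_{\tau_k}\mid\cf_{T_k}]$ into a misclassified-exercise term (your $D_1$, the paper's $A_k$) and a continuation term (your $D_2$), the same pointwise bound $\abs{D_1}\le\abs{V_k^{p,n}-V_k}$ on the disagreement event, and the same combination of the $L^2$-contraction property of the chaos projection and of conditional expectation with Proposition~\ref{prop:chaos-cv}. The only differences are presentational: you make explicit the tower/Jensen step for $D_2$ and the identity $C_{p,n|\sigma_k}(\lambda_k)=C_{p,n}\bigl(\E[Z_{\tau_{k+1}^{p,n}}\mid\cf_{T_k}]\bigr)$, which the paper states in passing.
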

\begin{proof}
  We proceed by induction. The result is true for $k=N$ as $\tau_N = \tau^{p,n}_k = T$. Assume it holds for $k+1$ ($k \le N-1$), we will prove it is true for $k$.
  \begin{align*}
    & \E[Z_{\tau^{p,n}_k} - Z_{\tau_k} | \cf_{T_k}] \\
    & = Z_{T_k} \left(\ind{Z_{T_k} \ge C_{p,n|\sigma_k}(\lambda_k)} - \ind{Z_{T_k} \ge \E[Z_{\tau_{k+1}}|\cf_{T_k}]}\right) \\
    & \quad + \E\left[Z_{\tau_{k+1}^{p,n}} \ind{Z_{T_k} < C_{p,n|\sigma_k}(\lambda_k)}   - Z_{\tau_{k+1}} \ind{Z_{T_k} < \E[Z_{\tau_{k+1}}|\cf_{T_k}]}| \cf_{T_k}\right] \\
    & = (Z_{T_k} - \E[Z_{\tau_{k+1}}|\cf_{T_k}]) \left(\ind{Z_{T_k} \ge C_{p,n|\sigma_k}(\lambda_k)} - \ind{Z_{T_k} \ge \E[Z_{\tau_{k+1}}|\cf_{T_k}]}\right) \\
    & \quad + \E\left[Z_{\tau_{k+1}^{p,n}} - Z_{\tau_{k+1}} | \cf_{T_k}\right] \ind{Z_{T_k} < C_{p,n|\sigma_k}(\lambda_k)}.
  \end{align*}
  By the induction assumption, the term $\E\left[Z_{\tau_{k+1}^{p,n}} - Z_{\tau_{k+1}} | \cf_{T_k}\right]$ goes to zero in $L^2(\Omega)$ as $p,n$ both go to infinity. So, we just have to prove that
  \[
  A_k = (Z_{T_k} - \E[Z_{\tau_{k+1}}|\cf_{T_k}]) \left(\ind{Z_{T_k} \ge C_{p,n|\sigma_k}(\lambda_k)} - \ind{Z_{T_k} \ge \E[Z_{\tau_{k+1}}|\cf_{T_k}]}\right)
  \]
  converges to zero in $L^2(\Omega)$.
  \begin{align}
    \label{eq:Ak}
    \abs{A_k} &\le \abs{Z_{T_k} - \E[Z_{\tau_{k+1}}|\cf_{T_k}]} \abs{\ind{Z_{T_k} \ge C_{p,n|\sigma_k}(\lambda_k)} - \ind{Z_{T_k} \ge \E[Z_{\tau_{k+1}}|\cf_{T_k}]}}  \notag\\
    & \le \abs{Z_{T_k} - \E[Z_{\tau_{k+1}}|\cf_{T_k}]} \abs{ \ind{\E[Z_{\tau_{k+1}}|\cf_{T_k}] >Z_{T_k} \ge C_{p,n|\sigma_k}(\lambda_k)} - \ind{C_{p,n|\sigma_k}(\lambda_k) > Z_{T_k} \ge \E[Z_{\tau_{k+1}}|\cf_{T_k}]}} \notag\\
    & \le \abs{Z_{T_k} - \E[Z_{\tau_{k+1}}|\cf_{T_k}]} \ind{\abs{Z_{T_k} - \E[Z_{\tau_{k+1}}|\cf_{T_k}]} \le  \abs{C_{p,n|\sigma_k}(\lambda_k) - \E[Z_{\tau_{k+1}}|\cf_{T_k}]}} \notag\\
    & \le \abs{C_{p,n|\sigma_k}(\lambda_k) - \E[Z_{\tau_{k+1}}|\cf_{T_k}]} \notag\\
    & \le \abs{C_{p,n|\sigma_k}(\lambda_k) - C_{p,n|\sigma_k}(\E[Z_{\tau_{k+1}}|\cf_{T_k}])}  +
    \abs{C_{p,n|\sigma_k}(\E[Z_{\tau_{k+1}}|\cf_{T_k}]) - \E[Z_{\tau_{k+1}}|\cf_{T_k}]}.
  \end{align}
  Note that $C_{p,n|\sigma_k}(\lambda_k)  = C_{p,n|\sigma_k}(\E[Z_{\tau_{k+1}}^{p,n}|\cf_{T_k}])$. The truncated chaos expansion $C_{p,n|\sigma_k}$ being an orthogonal projection on the space of random variables measurable with respect to the Brownian increments $G_1, \dots, G_k$, we clearly have that
  \begin{align*}
    &\E\left[\abs{C_{p,n|\sigma_k}(\lambda_k) - C_{p,n|\sigma_k}(\E[Z_{\tau_{k+1}}|\cf_{T_k}])}^2\right] \\
    & \le \E\left[\abs{\E[Z_{\tau_{k+1}^{p,n}}|\cf_{T_k}] - \E[Z_{\tau_{k+1}}|\cf_{T_k}]}^2\right] \\
    & \le \E\left[\abs{\E[Z_{\tau_{k+1}^{p,n}}|\cf_{T_{k+1}}] - \E[Z_{\tau_{k+1}}|\cf_{T_{k+1}}]}^2\right]
  \end{align*}
  where the last inequality comes from the orthogonal projection feature of the conditional expectation. Then, the induction assumption for $k+1$ yields that $C_{p,n|\sigma_k}(\lambda_k) - C_{p,n|\sigma_k}(\E[Z_{\tau_{k+1}}|\cf_{T_k}])$ goes to zero in $L^2(\Omega)$ as $p,n$ go to infinity. So, the first term on the r.h.s of~\eqref{eq:Ak} goes to zero.

  As $C_{p,n|\sigma_k}(\E[Z_{\tau_{k+1}}|\cf_{T_k}]) = C_{p,n}(\E[Z_{\tau_{k+1}}|\cf_{T_k}])$, the second term on the r.h.s of~\eqref{eq:Ak} goes to zero in $L^2(\Omega)$ thanks to Proposition~\ref{prop:chaos-cv}. Combining these two results yields the convergence statement of the proposition.
\end{proof}

\begin{remark}
  When the discrete time payoff process $(Z_{T_k})_{0 \le k \le N}$ is measurable for the filtration generated by the discrete time Brownian increments $(\cg_k)_{0 \le k \le N} = (\sigma(B_{T_{i+1}} - B_{T_i}, i \le k))_{0 \le k \le N}$, the result of Proposition~\ref{prop:cv-prix-pn} simplifies to $\lim_{p \to \infty} \E[Z_{\tau^{p,N}_k} | \cf_{T_k}] = \E[Z_{\tau_k} | \cf_{T_k}]$ in $L^2$. There is no need to let $n$ go to infinity, it is sufficient to take $n = N$. From a practical point of view, one should choose $n$ in order to monitor the discretization error between the true payoff process $Z$ and its implementable discretization $Z^n$ on a time grid with $n$ steps. Then, the parameter $n$ has to be considered as being fixed and we actually compute the price of the Bermudan option with payoff process $Z^n$ instead of $Z$. Therefore, when the model can be exactly sampled, one should choose $n = N$.
\end{remark}

\subsection{Convergence of the Monte Carlo approximation}
\label{sec:cv-mc}

In the following, we assume that $p$ and $n$ are fixed and we study the convergence with respect to the number of samples $M$. 

\subsubsection{Strong law of large numbers}
\label{sec-slln}

To start, we prove the convergence of the coefficients of the chaos expansions.
\begin{proposition}
  \label{prop:lambdaM}
Assume that for every $k=1,\dots,N$, $\P(Z_{T_k} \in \cc_{p,n|\sigma_k}) = 0$. Then, for every $k=1,\dots,N$, $\hL^M_k$ converges to $\Lambda_k$ a.s. as $M \to \infty$.
\end{proposition}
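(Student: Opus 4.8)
The plan is to argue by backward induction on $k$, from $k=N-1$ down to $k=1$; the top value requires nothing since $\htau_N^{p,n,(m)}=T$ is deterministic. For the base case $k=N-1$, the fact that $\htau_N^{p,n,(\ell)}=T$ for every $\ell$ makes the summands $Z_T^{(\ell)}H^{\otimes d}_\alpha(G^{(\ell)})$ defining $\hl^M_{N-1,\alpha}$ genuinely i.i.d. They are integrable: by Cauchy--Schwarz, $\abs{Z_T\,H^{\otimes d}_\alpha(G)}\le\max_{0\le j\le N}\abs{Z_{T_j}}\,\abs{H^{\otimes d}_\alpha(G)}$, whose right-hand side lies in $L^1$ because $\max_j\abs{Z_{T_j}}\in L^2$ while Hermite polynomials of Gaussians have finite moments of every order. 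The ordinary strong law of large numbers then yields $\hl^M_{N-1}\to\Lambda_{N-1}$ a.s.

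For the inductive step, suppose $\hl^M_j\to\Lambda_j$ a.s. for every $j\ge k+1$. Using the vector field $F$, write $Z^{(\ell)}_{\htau^{p,n,(\ell)}_{k+1}}=F_{k+1}(\hL^M,Z^{(\ell)},G^{(\ell)})$, so that for $\alpha\in\spaceIndex[\sigma_k]$,
\[
  \alpha!\,\hl^M_{k,\alpha}=\inv{M}\sum_{\ell=1}^M F_{k+1}(\hL^M,Z^{(\ell)},G^{(\ell)})\,H^{\otimes d}_\alpha(G^{(\ell)})=:\Psi^M_\alpha(\hL^M),
\]
where $\Psi^M_\alpha(\ell)$ denotes the same empirical average with the parameter frozen at a deterministic $\ell$. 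The obstacle is now visible: the samples used to build $\hL^M$ are the very ones being averaged, so $\Psi^M$ is evaluated at the \emph{data-dependent} point $\hL^M$, and a plain law of large numbers does not apply.

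I would resolve this by a uniform law of large numbers, showing that almost surely $\Psi^M$ converges to $\psi_{k+1}$ (restricted to the coordinates indexed by $\spaceIndex[\sigma_k]\subset\spaceIndex[\sigma_{k+1}]$) uniformly on a compact neighbourhood $K$ of $\Lambda$. Then the classical split
\[
  \abs{\Psi^M(\hL^M)-\psi_{k+1}(\Lambda)}\le\sup_{\ell\in K}\abs{\Psi^M(\ell)-\psi_{k+1}(\ell)}+\abs{\psi_{k+1}(\hL^M)-\psi_{k+1}(\Lambda)},
\]
valid once $\hL^M\in K$ (eventually true by the induction hypothesis), makes the first term vanish by uniform convergence and the second by continuity of $\psi_{k+1}$ at $\Lambda$ together with $\hL^M\to\Lambda$. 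Since $\psi_{k+1}(\Lambda)_\alpha=\alpha!\,\lambda_{k,\alpha}$ for $\alpha\in\spaceIndex[\sigma_k]$, this is exactly $\hl^M_k\to\Lambda_k$ a.s.

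Two ingredients feed this argument and form its technical heart. First, domination: for every $\ell$ one has $\abs{F_{k+1}(\ell,Z,G)}\le\max_{0\le j\le N}\abs{Z_{T_j}}$, since $F_{k+1}$ always equals one of $Z_{T_{k+1}},\dots,Z_{T_N}$, so the integrands are dominated uniformly in $\ell$ by an $L^1$ variable, legitimising the pointwise laws and controlling the averages. Second, and this is where the hypothesis $\P(Z_k\in\cc_{p,n|\sigma_k})=0$ enters, one needs continuity of $\psi_{k+1}$ at $\Lambda$: the map $\ell\mapsto F_{k+1}(\ell,z,g)$ jumps precisely where some $z_j$ meets $C_{p,n|\sigma_j}(\ell_j;g)$, i.e. on the boundaries of the indicator regions, and the non-atomicity assumption gives $\P\big(Z_{T_j}=C_{p,n|\sigma_j}(\Lambda_j;G)\big)=0$, so that $\ell\mapsto F_{k+1}(\ell,Z,G)$ is almost surely continuous at $\Lambda$ and dominated convergence upgrades this to continuity of $\psi_{k+1}$. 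The main difficulty remains the uniform law itself, because the parametrised class carries jumps and no off-the-shelf Glivenko--Cantelli statement applies directly; following~\cite{clp02} I would exploit that $F_{k+1}$ takes only the finitely many values $Z_{T_{k+1}},\dots,Z_{T_N}$, reducing the uniform control to finitely many indicator regions whose boundaries are null under the law of $(Z,G)$ thanks to the non-atomicity hypothesis.
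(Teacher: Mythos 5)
Your proposal has the same skeleton as the paper's proof: backward induction, the base case $k=N-1$ by the ordinary strong law of large numbers (your integrability check via Cauchy--Schwarz is the right justification), and in the inductive step the add-and-subtract of the frozen-parameter empirical average, with the non-atomicity hypothesis used to annihilate the boundary events where the indicators jump. Where the two arguments part ways is in how the perturbation term $\Psi^M(\hL^M)-\Psi^M(\Lambda)$ is controlled. You package it as a uniform law of large numbers over a compact neighbourhood $K$ of $\Lambda$ plus continuity of $\psi_{k+1}$ at $\Lambda$ (the continuity part of your argument is correct and needs no further work). The paper never proves, nor needs, a ULLN: it invokes Lemma~\ref{lem:Flip}, which bounds $\abs{F_{k+1}(a,Z,G)-F_{k+1}(\Lambda,Z,G)}$ by $\left(\sum_i \abs{Z_{T_i}}\right)$ times a sum of indicators of collar events $\left\{\abs{Z_{T_i}-C_{p,n|\sigma_i}(\Lambda_i)}\le \abs{a_i-\Lambda_i}\,\norm{C_{p,n|\sigma_i}}\right\}$. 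The crucial feature of this bound is its monotonicity in $\abs{a_i-\Lambda_i}$: once the induction hypothesis guarantees $\abs{\hL^M_i-\Lambda_i}\le\varepsilon$ for $M$ large, the perturbation term is dominated by a plain i.i.d.\ average of collar indicators at the \emph{fixed} level $\varepsilon$, to which the ordinary SLLN applies; the $\limsup$ is then bounded by an expectation which vanishes as $\varepsilon\to 0$ by the hypothesis $\P(Z_k\in\cc_{p,n|\sigma_k})=0$. In other words, the uniformity over a neighbourhood that you require is obtained for free from a monotone envelope, with no empirical-process or Glivenko--Cantelli machinery at all.

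This difference matters because the ULLN is precisely the step you leave unexecuted, and you acknowledge it as the main difficulty. It is not a fatal flaw --- your closing sketch (exploit that $F_{k+1}$ takes only the finitely many values $Z_{T_{k+1}},\dots,Z_{T_N}$, so that uniform control reduces to finitely many indicator regions whose boundaries are null) is exactly the mechanism behind Lemma~\ref{lem:Flip}, and writing it out carefully would reconstruct that lemma, after which your argument closes. But as submitted, the hardest step is asserted rather than proven, whereas the paper's route shows it can be bypassed entirely: you do not need uniform convergence of $\Psi^M$ on all of $K$, only a bound at the single random point $\hL^M$, and the monotone collar bound delivers that directly.
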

The proof of Proposition~\ref{prop:lambdaM} based on the following key lemma from~\cite{clp02}. The assumption $\P(Z_{T_k} \in \cc_{p,n}) = 0$ may look surprising but a very similar assumption was already required in~\cite[Lemma~3.2]{clp02}. This assumption combined with the following lemma proves that the vector field $F(a, Z, G)$ is a.s. continuous w.r.t the expansion coefficients~$a$.

\begin{lemma}
  \label{lem:Flip}
  For every $k=1,\dots, N-1$,
  \[
    \abs{F_k(a, Z, G) - F_k(b, Z, G)} \le \left(\sum_{i=k}^N \abs{Z_{T_i}}\right) \left(\sum_{i=k}^{N-1} \ind{\abs{Z_{T_i} - C_{p,n|\sigma_i}(b_i)} \le \abs{a_i - b_i} \norm{C_{p,n|\sigma_i}}}\right)
  \]
  where
\begin{align*}
  \norm{C_{p,n}} = \sup_{\abs{\lambda} = 1} \abs{C_{p,n}(\lambda)}.
\end{align*}
\end{lemma}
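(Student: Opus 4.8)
The plan is to prove the estimate by backward induction on $k$, running from $k=N$ down to $k=1$, stating it for all $k=1,\dots,N$ with the convention that the empty sum $\sum_{i=N}^{N-1}$ is zero. The base case $k=N$ is then immediate: since $F_N(a,Z,G)=F_N(b,Z,G)=Z_{T_N}$, the left-hand side vanishes, and so does the right-hand side because its inner indicator sum is empty. This makes the recursion self-contained and allows me to initialize the induction without any special treatment of the terminal date.

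For the inductive step I would assume the bound at level $k+1$ and work with the shorthand $I_a=\ind{Z_{T_k}\ge C_{p,n|\sigma_k}(a_k;G)}$ and $I_b=\ind{Z_{T_k}\ge C_{p,n|\sigma_k}(b_k;G)}$. Starting from
\[
  F_k(a,Z,G)=Z_{T_k}\,I_a+F_{k+1}(a,Z,G)\,(1-I_a),
\]
and the analogous identity for $b$, I would subtract and regroup so that the recursive difference is isolated, obtaining
\[
  F_k(a,Z,G)-F_k(b,Z,G)=\bigl(Z_{T_k}-F_{k+1}(b,Z,G)\bigr)(I_a-I_b)+\bigl(F_{k+1}(a,Z,G)-F_{k+1}(b,Z,G)\bigr)(1-I_a).
\]
Taking absolute values splits the estimate into a \emph{decision-flip} term carrying $\abs{I_a-I_b}$ and a \emph{propagation} term carrying $\abs{F_{k+1}(a,Z,G)-F_{k+1}(b,Z,G)}$; the latter is handled directly by the induction hypothesis after the harmless bound $1-I_a\le 1$.

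The heart of the argument is the decision-flip term, and here I would use two observations. First, by linearity of $\lambda\mapsto C_{p,n|\sigma_k}(\lambda;G)$ together with the definition of the operator norm,
\[
  \abs{C_{p,n|\sigma_k}(a_k;G)-C_{p,n|\sigma_k}(b_k;G)}=\abs{C_{p,n|\sigma_k}(a_k-b_k;G)}\le \abs{a_k-b_k}\,\norm{C_{p,n|\sigma_k}}.
\]
Second, $I_a\ne I_b$ forces $Z_{T_k}$ to lie between $C_{p,n|\sigma_k}(a_k;G)$ and $C_{p,n|\sigma_k}(b_k;G)$; a short case check on the two possible orderings shows that in either case $\abs{Z_{T_k}-C_{p,n|\sigma_k}(b_k;G)}\le\abs{C_{p,n|\sigma_k}(a_k;G)-C_{p,n|\sigma_k}(b_k;G)}$. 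Chaining the two gives
\[
  \abs{I_a-I_b}=\ind{I_a\ne I_b}\le \ind{\abs{Z_{T_k}-C_{p,n|\sigma_k}(b_k;G)}\le \abs{a_k-b_k}\,\norm{C_{p,n|\sigma_k}}},
\]
which is precisely the $i=k$ indicator of the target bound. Since $F_{k+1}(b,Z,G)$ equals one of $Z_{T_{k+1}},\dots,Z_{T_N}$, I have $\abs{Z_{T_k}-F_{k+1}(b,Z,G)}\le\sum_{i=k}^N\abs{Z_{T_i}}$, so the decision-flip term is controlled by $\bigl(\sum_{i=k}^N\abs{Z_{T_i}}\bigr)$ times this indicator.

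Finally I would add the two contributions: after enlarging the induction-term prefactor $\sum_{i=k+1}^N\abs{Z_{T_i}}$ to $\sum_{i=k}^N\abs{Z_{T_i}}$, the common prefactor factors out and the two indicator pieces combine into $\sum_{i=k}^{N-1}\ind{\abs{Z_{T_i}-C_{p,n|\sigma_i}(b_i)}\le\abs{a_i-b_i}\norm{C_{p,n|\sigma_i}}}$, yielding the claimed estimate at level $k$. The only genuinely delicate point, and the one I expect to be the main obstacle and would write out most carefully, is the geometric case analysis showing that a flip in the exercise decision confines $Z_{T_k}$ to the interval between the two chaos values; the regrouping identity and the operator-norm inequality are otherwise routine.
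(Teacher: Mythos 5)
Your proof is correct. Note that the paper itself does not prove this lemma --- it is quoted as a ``key lemma'' imported from the cited reference of Cl\'ement, Lamberton and Protter, so there is no in-paper argument to compare against; your backward induction, with the decomposition into a decision-flip term $\bigl(Z_{T_k}-F_{k+1}(b,Z,G)\bigr)(I_a-I_b)$ and a propagation term handled by the induction hypothesis, is precisely the standard argument used in that reference, and all the steps (the case check confining $Z_{T_k}$ between the two chaos values, the pathwise homogeneity bound $\abs{C_{p,n|\sigma_k}(a_k-b_k;G)}\le\abs{a_k-b_k}\norm{C_{p,n|\sigma_k}}$, and the enlargement of the prefactor) are sound.
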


\begin{proof}[Proof of Proposition~\ref{prop:lambdaM}]
  We proceed by induction.  For $k=N-1$, the result directly follows from the standard strong law of large numbers. Choose $k \le N-2$ and assume the result holds for $k+1,\dots,N-1$ , we aim at proving this is true for $k$.
  \begin{align*}
   \hl_{k,\alpha}^M = \inv{M \alpha!} \sum_{m = 1}^M F_{k+1}(\hL_{k+1}^M, Z^{(m)}, G^{(m)}) H^{\otimes d}_\alpha (G^{(m)}).
  \end{align*}
  By the standard strong law of large number, $\inv{M \alpha!} \sum_{m = 1}^M F_{k+1}(\hL_{k+1}, Z^{(m)}, G^{(m)}) H^{\otimes d}_\alpha (G^{(m)})$ converges a.s. to $\inv{\alpha!}\E[ F_{k+1}(\hL_{k+1}, Z, G) H^{\otimes d}_\alpha (G)] = \l_{k,\alpha}$. Then, it is sufficient to prove that
  \[
    \Psi_M = \inv{M} \sum_{m=1}^M \left(F_{k+1}(\hL^M_{k+1}, Z^{(m)}, G^{(m)})  - F_{k+1}(\hL_{k+1}, Z^{(m)}, G^{(m)})\right)  H^{\otimes d}_\alpha (G^{(m)})\to 0 \quad a.s.
  \]
  Then, using Lemma~\ref{lem:Flip}, we have
  \begin{align*}
    & \abs{\Psi_M}  \le \inv{M} \sum_{m=1}^M  \abs{F_{k+1}(\hL^M_{k+1}, Z^{(m)}, G^{(m)}) - F_{k+1}(\hL_{k+1}, Z^{(m)}, G^{(m)})}  \abs{H^{\otimes d}_\alpha (G^{(m)})} \\
    & \le \inv{M} \sum_{m=1}^M  \sum_{i=k+1}^N \abs{Z_{T_{i+1}}^{(m)}} \left(\sum_{i=k+1}^{N-1} \ind{\abs{Z_{T_i}^{(m)} - C_{p,n|\sigma_i}^{(m)}(\Lambda_i)} \le \abs{\hL^M_i  - \Lambda_i}  \norm{C_{p,n|\sigma_i}}}\right) \abs{H^{\otimes d}_\alpha (G^{(m)})}
   \end{align*}
   From the induction assumption for $k+1, \dots, N-1$, we have that for $i=k+1,\dots, N-1$, $\hL^M_i \to \Lambda_i$. Then, for any $\varepsilon >0$, we have
   \begin{align*}
     & \limsup_M \abs{\Psi_M}  \\
     & \le \limsup_M \inv{M} \sum_{m=1}^M  \sum_{i=k+1}^N \abs{Z_{T_{i+1}}^{(m)}} \left(\sum_{i=k+1}^{N-1} \ind{\abs{Z_{T_i}^{(m)} - C_{p,n|\sigma_i}^{(m)}(\Lambda_i)} \le \varepsilon  \norm{C_{p,n|\sigma_i}}}\right) \abs{H^{\otimes d}_\alpha (G^{(m)})} \\
     & \le E\left[\sum_{i=k+1}^N \abs{Z_{T_{i+1}}} \left(\sum_{i=k+1}^{N-1} \ind{\abs{Z_{T_i} - C_{p,n|\sigma_i}(\Lambda_i)} \le \varepsilon  \norm{C_{p,n|\sigma_i}}}\right) \abs{H^{\otimes d}_\alpha (G)}\right]
   \end{align*}
   where the last equality follows from the strong law of large numbers. As $\P(Z_{T_k} \in \cc_{p,n|\sigma_k}) = 0$ for all $k$, we can let $\varepsilon$ go to $0$ to obtain that $\limsup_M \abs{\Psi_M}  = 0$ a.s.
\end{proof}
Once the convergence of the expansion is established, we can now study the convergence of $U^{p,n,M}_0$ to $U^{p,n}_0$ when $M \to \infty$.
\begin{theorem}
  \label{thm:slln}
  Assume that for every $k=1,\dots,N$, $\P(Z_{T_k} \in \cc_{p,n}) = 0$. Then, for $q=1,2$ and all $k=1,\dots,N$,
  \[
  \lim_{M \to \infty} \inv{M} \sum_{m=1}^M \left(Z_{\htau_k^{p,n,(m)}}^{(m)}\right)^q = \E\left[\left(Z_{\tau_k^{p,n}}\right)^q\right] \quad a.s.
  \]
\end{theorem}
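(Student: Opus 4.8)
The plan is to mirror the structure of the proof of Proposition~\ref{prop:lambdaM}, using that $Z_{\htau_k^{p,n,(m)}}^{(m)} = F_k(\hL^M, Z^{(m)}, G^{(m)})$ and that, by Proposition~\ref{prop:lambdaM}, $\hL^M \to \Lambda$ almost surely. The difficulty is that each summand depends on $\hL^M$, which is itself built from all $M$ samples, so the ordinary strong law of large numbers does not apply directly. To circumvent this I would split
\begin{align*}
  \inv{M} \sum_{m=1}^M \left(F_k(\hL^M, Z^{(m)}, G^{(m)})\right)^q
  &= \inv{M} \sum_{m=1}^M \left[\left(F_k(\hL^M, Z^{(m)}, G^{(m)})\right)^q - \left(F_k(\Lambda, Z^{(m)}, G^{(m)})\right)^q\right] \\
  &\quad + \inv{M} \sum_{m=1}^M \left(F_k(\Lambda, Z^{(m)}, G^{(m)})\right)^q.
\end{align*}
The second term involves only the deterministic coefficient vector $\Lambda$, so its summands are i.i.d.\ and the standard strong law of large numbers gives convergence to $\E\left[\left(F_k(\Lambda, Z, G)\right)^q\right] = \E\left[\left(Z_{\tau_k^{p,n}}\right)^q\right]$ almost surely; integrability holds since $\abs{F_k(\Lambda, Z, G)} \le \max_{0 \le i \le N}\abs{Z_{T_i}} \in L^2$. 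It thus remains to show the first (difference) term vanishes almost surely.

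For the difference term, I would reduce $q=2$ to $q=1$ through the factorisation $a^2 - b^2 = (a-b)(a+b)$: since $F_k(\ell, z, g)$ always equals one of the coordinates $z_i$, both $\abs{F_k(\hL^M, Z^{(m)}, G^{(m)})}$ and $\abs{F_k(\Lambda, Z^{(m)}, G^{(m)})}$ are bounded by $\max_{k \le i \le N} \abs{Z_{T_i}^{(m)}}$, so the $(a+b)$ factor is at most $2\max_{k \le i \le N} \abs{Z_{T_i}^{(m)}}$. In both cases Lemma~\ref{lem:Flip} bounds the absolute value of the difference term by
\[
  \inv{M} \sum_{m=1}^M W^{(m)} \left(\sum_{i=k}^{N-1} \ind{\abs{Z_{T_i}^{(m)} - C_{p,n|\sigma_i}^{(m)}(\Lambda_i)} \le \abs{\hL^M_i - \Lambda_i}\, \norm{C_{p,n|\sigma_i}}}\right),
\]
where $W^{(m)} = \sum_{i=k}^N \abs{Z_{T_i}^{(m)}}$ for $q=1$ and $W^{(m)} = 2\max_{k \le i \le N}\abs{Z_{T_i}^{(m)}} \sum_{i=k}^N \abs{Z_{T_i}^{(m)}}$ for $q=2$. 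Both random variables $W$ are integrable, the latter by Cauchy--Schwarz together with $\max_{0 \le i \le N}\abs{Z_{T_i}} \in L^2$.

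The concluding step reproduces the $\varepsilon$-argument of Proposition~\ref{prop:lambdaM}. Since $\hL^M_i \to \Lambda_i$ a.s., for any fixed $\varepsilon > 0$ we eventually have $\abs{\hL^M_i - \Lambda_i} \le \varepsilon$, so the $\limsup_M$ of the above bound is dominated by the same average with $\varepsilon$ in place of $\abs{\hL^M_i - \Lambda_i}$; that average now has i.i.d.\ summands and the strong law of large numbers identifies its limit as
\[
  \E\left[W \sum_{i=k}^{N-1} \ind{\abs{Z_{T_i} - C_{p,n|\sigma_i}(\Lambda_i)} \le \varepsilon\, \norm{C_{p,n|\sigma_i}}}\right].
\]
Letting $\varepsilon \downarrow 0$ and using dominated convergence, this decreases to $\E\left[W \sum_{i=k}^{N-1} \ind{Z_{T_i} = C_{p,n|\sigma_i}(\Lambda_i)}\right]$, which is zero because $C_{p,n|\sigma_i}(\Lambda_i) \in \cc_{p,n}$ and the assumption $\P(Z_{T_i} \in \cc_{p,n}) = 0$ makes each event $\{Z_{T_i} = C_{p,n|\sigma_i}(\Lambda_i)\}$ negligible. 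Hence the difference term tends to zero a.s.\ and the claim follows. The main obstacle is precisely the interdependence of the summands through $\hL^M$: the decomposition isolates it into a term governed by the Lipschitz estimate of Lemma~\ref{lem:Flip}, and the non-degeneracy assumption $\P(Z_{T_i} \in \cc_{p,n}) = 0$ is what annihilates the residual boundary indicator in the $\varepsilon \to 0$ limit.
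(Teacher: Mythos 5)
Your proposal is correct and follows essentially the same route as the paper's own proof: the identical decomposition into an i.i.d.\ term handled by the strong law of large numbers plus a difference term, the same Lipschitz-type bound from Lemma~\ref{lem:Flip} combined with the a.s.\ convergence $\hL^M \to \Lambda$ from Proposition~\ref{prop:lambdaM}, and the same $\varepsilon$-argument in which the assumption $\P(Z_k \in \cc_{p,n}) = 0$ kills the boundary indicator as $\varepsilon \downarrow 0$. The only cosmetic difference is that you treat $q=1$ and $q=2$ with separate weights $W^{(m)}$, whereas the paper unifies them via $\abs{x^q - y^q} = \abs{x-y}\,\abs{x^{q-1}+y^{q-1}}$; both yield the same bound.
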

\begin{proof}
  Note that $\E[(Z_{\tau_k^{p,n}})^q] = \E[F_k(\hL, Z, G)^q]$ and by the strong law of large numbers
  \[
    \lim_{M \to \infty} \inv{M} \sum_{m=1}^M F_k(\hL, Z^{(m)}, G^{(m)})^q = \E[F_k(\hL, Z, G)^q] \quad a.s.
  \]
  Hence, we have to prove that
  \[
    \Delta F_M =  \inv{M} \sum_{m=1}^M \left(F_k(\hL^M, Z^{(m)}, G^{(m)})^q - F_k(\hL, Z^{(m)}, G^{(m)})^q\right) \xrightarrow[M \to \infty]{a.s} 0.
  \]
  For any $x, y \in \R$, and $q=1,2$, $\abs{x^q - y^q}= \abs{x - y} \abs{x^{q-1} + y^{q-1}}$.
  Using Lemma~\ref{lem:Flip} and that $\abs{F_k(\gamma, z, g)} \le \max_{k \le j \le N} \abs{z_j}$, we have
  \begin{align*}
    & \abs{\Delta F_M}  \le \inv{M} \sum_{m=1}^M  \abs{F_k(\hL^M_k, Z^{(m)}, G^{(m)})^q - F_k(\hL_k, Z^{(m)}, G^{(m)})^q} \\
    & \le 2  \inv{M} \sum_{m=1}^M  \sum_{i=k}^N \max_{k \le j \le N} \abs{Z_{T_j}^{(m)}}\abs{Z_{T_{i+1}}^{(m)}} \left(\sum_{i=k}^{N-1} \ind{\abs{Z_{T_i}^{(m)} - C_{p,n|\sigma_i}^{(m)}(\Lambda_i)} \le \abs{\hL^M_i  - \Lambda_i}  \norm{C_{p,n|\sigma_i}}}\right)
   \end{align*}
   Using Proposition~\ref{prop:lambdaM}, $\hL^M_i \to \Lambda_i$ for all $i=1,\dots,N-1$. Then for any $\varepsilon > 0$,
   \begin{align*}
     & \limsup_M \abs{\Delta F_M} \\
     & \le 2 \limsup_M \inv{M} \sum_{m=1}^M  \sum_{i=k}^N \max_{k \le j \le N} \abs{Z_{T_j}^{(m)}} \abs{Z_{T_{i+1}}^{(m)}} \left(\sum_{i=k}^{N-1} \ind{\abs{Z_{T_i}^{(m)} - C_{p,n|\sigma_i}^{(m)}(\Lambda_i)} \le \varepsilon  \norm{C_{p,n|\sigma_i}}}\right) \\
     & \le 2 \E\left[ \sum_{i=k}^N \max_{k \le j \le N} \abs{Z_{T_j}} \abs{Z_{T_{i+1}}} \left(\sum_{i=k}^{N-1} \ind{\abs{Z_{T_i} - C_{p,n|\sigma_i}(\Lambda_i)} \le \varepsilon  \norm{C_{p,n|\sigma_i}}}\right) \right]
   \end{align*}
   where the last inequality follows from the strong law of large numbers as $\E[\max_{k \le j \le N} \abs{Z_{T_j}}^2 ] < \infty$. We conclude that $\limsup_M \abs{\Delta F_M} = 0$ by letting $\varepsilon$ go to $0$ and by using that for every $k=1,\dots,N$, $\P(Z_{T_k} \in \cc_{p,n}) = 0$.
\end{proof}
The case $q=1$ proves the strong law of large numbers for the algorithm. Considering that all the paths are actually mixed through the chaos expansion, it is unlikely that the estimators $\inv{M} \sum_{m=1}^M Z_{\htau_k^{p,n,(m)}}^{(m)}$ for $k=1,\dots,N$ are unbiased. We recall that $U_k^{p,n,M} =  \inv{M} \sum_{m=1}^M F_k(\hL^M, Z^{(m)}, G^{(m)})$ and $Z_{\tau_k^{p,n}} = F_k(\Lambda, Z, G)$. Then,
\begin{align*}
  & \E\left[U_k^{p,n,M}\right]  - \E\left[Z_{\tau_k^{p,n}} \right] = \E\left[ \inv{M} \sum_{m=1}^M \left(F_k(\hL^M, Z^{(m)}, G^{(m)}) - F_k(\Lambda, Z^{(m)}, G^{(m)})\right)\right] \\
& = \E\left[ F_k(\hL^M, Z^{(1)}, G^{(1)}) - F_k(\Lambda, Z^{(1)}, G^{(1)}) \right]
\end{align*}
where we have used that all the random variables have the same distribution. Hence, the bias of our estimator is directly linked to the gap between $\hL^M$ and the true value $\L$. Let $p < p'$, then for any $\alpha \in \spaceIndex$, $\alpha \in {A^{\otimes d}_{p',n}}$ and the corresponding value $\hl^M_{k, \alpha}$ is the same for $p$ and $p'$. This means that when $p$ increases, the length of $\hL^M$ increases with the first components remaining unchanged. Therefore, $\abs{\hL^M - \L}$ increases with $p$, which suggests that, for a fixed $M$, the bias also increases with $p$. Moreover, it was already noted in \cite{glasserman04number} that for a fixed number of samples $M$, the mean square error on the coefficients of the regression explodes with the number of regressors. In our framework, this means that, for a fixed M, $\E\left[\abs{\hL^M - \L}^2\right]$ will increase with $p$.

\subsubsection{Discussion on the rate of convergence}

From Theorem~\ref{thm:slln}, we deduce that the standard empirical variance estimator applied to our algorithm converges. For every $k=1,\dots,N$,
\begin{equation}
  \label{eq:empirical-var}
  \lim_{M \to \infty} \quad \inv{M} \sum_{m=1}^M \left(Z_{\htau_k^{p,n,(m)}}^{(m)}\right)^2 - \left(\inv{M} \sum_{m=1}^M Z_{\htau_k^{p,n,(m)}}^{(m)} \right)^2 = \Var(Z_{\tau_k^{p,n}}) \quad a.s.
\end{equation}
The convergence rate analysis carried out in~\cite{clp02} applies steadily to our approach. Then, under suitable assumptions, the vector
\begin{equation}
  \left(\sqrt{M} \left(\inv{M} \sum_{m=1}^M Z_{\htau_k^{p,n,(m)}}^{(m)} - \E[Z_{\tau_k^{p,n}}] \right)\right)_{k=1,\dots, N}
\end{equation}
converges in law to a normal distribution with mean zero. As noted in~\cite{clp02}, determining the asymptotic variance directly from the data generated by a single run of the algorithm is almost impossible. From the proof of the central limit theorem for their algorithm, we have, when $M$ goes to infinity, in the $L^2$ sense
\begin{multline}
  \label{eq:var-decomposition}
  \sqrt{M} \left(\inv{M} \sum_{m=1}^M Z_{\htau_k^{p,n,(m)}}^{(m)} - \E[Z_{\tau_k^{p,n}}] \right)\\ = \sqrt{M} \left(\inv{M} \sum_{m=1}^M Z_{\tau_k^{p,n,(m)}}^{(m)} - \phi_k(\Lambda) \right)  + \sqrt{M} ( \phi_k(\hL^M) - \phi_k(\Lambda)).
\end{multline}
Remember that $Z_{\tau_k^{p,n,(m)}}^{(m)}  = F_k(\Lambda, Z^{(m)}, G^{(m)})$. By the standard central limit theorem, $\sqrt{M} \left(\inv{M} \sum_{m=1}^M Z_{\tau_k^{p,n,(m)}}^{(m)} - \phi_k(\Lambda) \right)$ converges in law to a normal distribution with variance $\Var(Z_{\tau_k^{p,n}})$. Then, using the empirical variance of the estimator as a measurement of the algorithm converge actually misses part of the variance since from~\eqref{eq:empirical-var}, we know that the empirical variance only takes into account the first term on the r.h.s of~\eqref{eq:var-decomposition}.

\section{Numerical experiments}
\label{sec:numerics}

In this section, we carry out several numerical experiments using our algorithm. In the different tables, the ``Price'' column corresponds to the value of $U_0^{p,n,M}$ averaged over $25$ independent runs of the algorithm and the ``Variance'' column is the variance of $U_0^{p,n,M}$ computed on these $25$ independent runs. The first two experiments, which deal with put options, enable us to compare the accuracy of our method with the standard Longstaff Schwartz algorithm using only the in-the-money paths at each time step, whose price is reported in the ``LS'' column. Then, we consider more sophisticated truly path dependent options for which the use of the standard Longstaff Schwartz algorithm becomes prohibitive because of the well-known curse of dimensionality. In all the examples, we use $N = n$, ie we do not subdiscretize the grid given by the exercising dates to compute the chaos expansions.

\subsection{Examples in the Black Scholes model}

The $d-$dimensional Black Scholes model writes for $j \in \{1, \dots, d\}$
\begin{align*}
  dS^j_t = S^j_t ( r_t dt + \sigma^j L_j dB_t)
\end{align*}
where $B$ is a Brownian motion with values in $\R^d$, $\sigma = (\sigma^1, \dots,
\sigma^d)$ is the vector of volatilities, assumed to be deterministic and positive at
all times and $L_j$ is the $j$-th row of the matrix $L$ defined as a square root of the
correlation matrix $\Gamma$, given by
\begin{equation*}
  \Gamma = \begin{pmatrix}
    1 & \rho & \hdots & \rho\\
    \rho & 1 &\ddots & \vdots\\
    \vdots&\ddots&\ddots& \rho\\
    \rho &\hdots & \rho & 1
  \end{pmatrix}
\end{equation*}
where $\rho \in ]-1 / (d-1), 1]$ to ensure that $\Gamma$ is positive definite.

\subsubsection{Assessing the method on the one-dimensional put option}

Before investigating more elaborate numerical example, we want to test our method on the Bermudan put option. As standard as this example might be, getting a trustworthy reference price is not an easy task. We rely on prices computed by a convolution method in~\cite{Oosterlee08} and later used as reference prices in~\cite{oosterlee09}. We report in Table~\ref{tab:put-bs} our values compared to the reference prices for two different volatilities. Our prices are already very close the \emph{true} prices even with a second order expansion $p=2$. On these examples, we are within $0.2\%$ of the reference prices.

\begin{table}[htp]
  \centering\begin{tabular}{cccccccc}
    \hline
    $\sigma$ & p & M & Price & Variance & Reference price\\
    \hline
    0.2  & 2 & 1E5 & 10.48  & 7E-4  & 10.4795  \\
    0.2  & 2 & 1E6 & 10.47  & 7E-5  \\
    0.2  & 3 & 1E5 & 10.48  & 6E-4  \\
    0.2  & 3 & 1E6 & 10.47 & 6E-5 \\
    0.25 & 2 & 1E5 & 11.96  & 1E-3 & 11.987 \\
    0.25 & 2 & 1E6 & 11.94  & 2E-4  \\
    0.25 & 3 & 1E5 & 11.96  & 9E-4  \\
    0.25 & 3 & 1E6 & 11.96  & 1E-4  \\
    \hline
  \end{tabular}
	\caption{Put option with $r=0.1$, $T=1$, $K=110$, $S_0=100$ and $N=10$.} \label{tab:put-bs}
\end{table}

\subsubsection{A put basket option}

We consider a put basket option with payoff
\[
  \left(K - \sum_{i=1}^d \omega_i S_T^i\right)_+,
\]
which can be priced using the classical Longstaff Schwartz algorithm and therefore enables us to test the accuracy of our approach in a multidimensional setting. We test our algorithm in dimension $5$ and report the results in Table~\ref{tab:basket} for different numbers of samples $M$ and different orders $p$ of chaos expansion. The values reported in the ``LS'' column correspond to the prices computed with the Longstaff Schwartz algorithm with $10^6$ samples and using as regression functions the set of polynomials of total order $3$ completed with the payoff function.

We notice that an expansion of order $p=2$ already gives a price fairly close to the ``LS'' one for a quite reasonable computational time. Increasing $p$ to $3$ improves the accuracy only when the number of samples $M$ is also increased. \JL{Indeed, we can see that the prices obtained for $K=90$ and $p=3$ for small values of $M$ ($M=5E4$ or $M=1E5$) are above the dual price. This clearly happens because $p$ is too large compared to $M$, which induces a bias.} We refer the reader to the discussion following Theorem~\ref{thm:slln} for more information on this point. Hence, in a brand new setting, we advise to start with $p=2$ and to monitor the variance to fix how many Monte Carlo samples are required $M$. Then, if need be, one can try $p=3$ with keeping in mind that $M$ should be increased at the same time. In our example, we basically add an order of magnitude to $M$, when going from $p=2$ to $p=3$.
\begin{table}[htp]
  \centering\begin{tabular}{cccccccccc}
    \hline
    T & K & N & p & M & Price & Variance & LS & Dual price\\
    \hline
    3 & 100 & 20 & 2 & 5E4 &  4.01793 &  0.00039 & 4.07 & 4.3\\
    3 & 100 & 20 & 2 & 1E5 &  4.00769 &  0.00028 \\
    3 & 100 & 20 & 2 & 1E6 &  3.99801 &  2.15E-05 \\
    3 & 100 & 20 & 3 & 5E4 &  4.2544 &  0.00041 \\
    3 & 100 & 20 & 3 & 1E5 &  4.1965 &  0.00024 \\
    3 & 100 & 20 & 3 & 1E6 &  4.06587 &  2.19E-05 \\
    3 & 90 & 20 & 2 & 5E4 &  1.29423 &  0.00013 & 1.32 & 1.47\\
    3 & 90 & 20 & 2 & 1E5 &  1.27274 &  0.00011 \\
    3 & 90 & 20 & 2 & 1E6 &  1.25166 &  2.242E-05 \\
    3 & 90 & 20 & 3 & 5E4 &  1.52426 &  8.84E-05 \\
    3 & 90 & 20 & 3 & 1E5 &  1.49847 &  0.00010 \\
    3 & 90 & 20 & 3 & 1E6 &  1.31845 &  2.72E-05 \\
    \hline
  \end{tabular}
	\caption{Basket option with $r=0.05$, $d=5$, $\sigma^i = 0.2$, $\omega^i = 1/d$, $S_0^i=100$ and $\rho=0.2$.} \label{tab:basket}
\end{table}

\subsubsection{Asian option}

For this example, we consider a one dimensional Black Scholes model, $d=1$.
We consider an Asian with payoff $Z_t = (K - X_t)_+$ with $X_0 = S_0$ and for $t >0$
\[
  X_t = \inv{t} \int_{0}^{t} S_u du.
\]

We approximate the continuous time integral by an arithmetic average and we compare our results with the one reported by~\cite{HW93} (in the ``HW'' column in Table~\ref{tab:asian}), which, despite being quite old, is still considered as a benchmark by many papers investigating American Asian options.

\begin{table}[htp]
  \centering\begin{tabular}{cccccccc}
    \hline
    T & K & N & p & M & Price & Variance & HW\\
    \hline
    1 & 45 & 20 & 2 & 1E6 & 8.55   & 1E-4 & 8.55 \\
    1 & 45 & 20 & 3 & 1E6 & 8.47   & 1E-4 \\
    1 & 45 & 20 & 3 & 1E7 & 8.61   & 3E-6 \\
    1 & 50 & 20 & 2 & 1E6 & 4.81   & 1E-4 & 4.89\\
    1 & 50 & 20 & 3 & 1E6 & 4.7    & 1E-4 \\
    1 & 50 & 20 & 3 & 1E7 & 4.79   & 4E-6 \\
    2 & 45 & 20 & 2 & 1E6 & 10.63  & 2E-4 & 10.62\\
    2 & 45 & 20 & 3 & 1E6 & 10.46  & 2E-4 \\
    2 & 45 & 20 & 3 & 1E7 & 10.66  & 6E-6 \\
    2 & 50 & 20 & 2 & 1E6 & 7.28   & 2E-4 & 7.33\\
    2 & 50 & 20 & 3 & 1E6 & 7.24   & 2E-4 \\
    2 & 50 & 20 & 3 & 1E7 & 7.29   & 7E-6 \\
    \hline
  \end{tabular}
	\caption{Asian option with $r=0.1$, $d=1$, $\sigma = 0.3$, $S_0=50$ and $N=40$ (resp. $80$) for $T=1$ (resp. $T=2$).} \label{tab:asian}
\end{table}

It is known that although the payoff does not seem to be Markovian in dimension $1$, if we augment the state space and consider the pair $(S, X)$, then the option becomes Markovian again. Hence, Asian options can serve as a good example to assess the efficiency of our algorithm by considering the non Markovian representation of the Asian option in our method. As in the previous example, we notice that a second order expansion $p=2$ already gives very accurate price, within $1\%$ of the benchmark price computed by~\cite{HW93} using a tree method. Increasing $p$ to $3$ does not significantly improve the accuracy of the process but does require to increase the number of Monte Carlo samples. 

\subsubsection{Moving average option}

For this example, we consider a one dimensional Black Scholes model, $d=1$.
We consider a moving average option with payoff $Z_t = (S_t - X_t)_+$ for $t\ge \delta + \ell$ with
\[
  X_t = \inv{\delta} \int_{t-\delta - \ell}^{t-\ell} S_u du
\]
where $\delta >0$ is the length of the averaging window and $\ell$ is a delay.

We approximate the continuous time integral by an arithmetic average and compare our results with the benchmark prices computed by~\cite{BTW11}. Let $N_\delta = \frac{\delta}{T} N$ and $N_\ell = \frac{\ell}{T} N$. For every $T_i \ge \delta + \ell$, we approximate $X_{t_i}$ by
\[
	X^N_{T_i} = \inv{N_\delta} \sum_{j=i - N_\delta - N_\ell + 1}^{i - N_\ell} S_{T_j}.
\]
The benchmark prices reported in the ``LS'' column come from~\cite{BTW11} and were computed using the standard Longstaff Schwartz algorithm with regression factors at time $T_i$ given by
\[
  \left(S_{T_{i - N_\delta - N_\ell + 1}}, S_{T_{i - N_\delta - N_\ell + 2}}, \dots, S_{T_{i - N_\ell}}\right).
\]
This leads to a regression problem with $N_\delta$ variables, which makes it very CPU demanding. While our approach may also look like a multi variate regression, the main difference lies in the choice of an orthogonal basis function which turns the computation of the coefficients of the regression from a linear system into a bunch of independent Monte Carlo computations. Although this seems a minor change, it is indeed a huge improvement as it breaks the bottleneck of the standard Longstaff Schwartz algorithm and makes it easy to parallelize.

We run two series of tests on the moving average option, which is a typical example of a true path-dependent option in the sense that the size of the underlying Markov process $X$ (see~\eqref{eq:EMarkov}) is basically the number of exercising dates. We report in Table~\ref{tab:moving-average-nodelay} the results for the non delayed option, ie $N_\ell = 0$ and in Table~\ref{tab:moving-average-delay} the results for the option with delay. When there is no delay (Table~\ref{tab:moving-average-nodelay}), we are able to recover the prices computed with the Longstaff Schwartz method using the full list of regressors.  Our results are already very accurate for a chaos expansion of order $p=2$. To really benefit from a more accurate chaos expansion of order $p=3$, one also needs to increase the number of samples $M$ to cut down the bias. Note the price $>4.268$ in the ``LS-price'' column for $w=0.04$. In~\cite{BTW11}, they did not succeed in computing the price of this option using the Longstaff Schwartz method using the full list of regressors, so they only provided a non Markovian approximation $4.268$, which is always below the true price. Hence, the value $4.30329$ obtained for $p=3$ and $M=10^6$ does make sense. We also report in the column ``Dual price'' of Table~\ref{tab:moving-average-nodelay} the upper bound obtained from~\cite{lel:chaos}. A small gap remains between the lower and upper bounds, but it can be considered as more than acceptable considering the numerical challenges represented by the highly path-dependent products. \JL{Since the work by~\cite{BTW11}, new methods have been developed to handle high dimensional regressions mostly by using machine learning techniques. For instance, we can cite the recent works~\cite{becker2019deep,becker2019deep2}. It would be very interesting to use these algorithms to build a price comparator for the non-Markovian settings.}

\begin{table}[htp]
	\centering\begin{tabular}{ccccccc}
	\hline
	$\delta$  &  p & M       & Price   & Variance   & LS-Price & Dual price\\
	\hline
	0.02  & 2 & 1E5  & 3.53118 & 8.97E-06 & 3.531 & 3.76 \\
	0.02  & 2 & 1E6 & 3.53863 & 9.7E-07 \\
	0.02  & 3 & 1E5  & 3.45177 & 7.05E-06 \\
	0.02  & 3 & 1E6 & 3.52758 & 7.12E-07 \\
	0.04 & 2 & 1E5  & 4.30318 & 1.7E-04 & $>$ 4.268 & 4.52 \\
	0.04 & 2 & 1E6 & 4.31781 & 8.82E-07  \\
	0.04 & 3 & 1E5  & 4.18467 & 1.31E-04 \\
	0.04 & 3 & 1E6 & 4.30239 & 1.10E-06 \\
	\hline
	\end{tabular}
	\caption{Moving average option with $S_0 = 100$, $\sigma = 0.3$, $r=0.05$, $T=0.2$, $N=n=50$ and $\ell=0$ (no delay).} \label{tab:moving-average-nodelay}
\end{table}

\begin{table}[htp]
	\centering\begin{tabular}{ccccc}
		\hline
		  p & M & Price   & Variance    \\
		\hline
		 2 & 5E4   & 6.62011 & 7.5E-4 \\
		 2 & 1E5  & 6.67733 & 2.5E-4 \\
		 2 & 1E6 & 6.74565 & 2.00E-05 \\
		 3 & 5E4   & 6.28484 & 4.2E-4 \\
		 3 & 1E5  & 6.36383 & 3.1E-4 \\
		 3 & 1E6 & 6.65446 & 8.02E-06 \\
		\hline
	\end{tabular}
	\caption{Moving average option with $S_0 = 100$, $\sigma = 0.3$, $r=0.05$, $T=0.2$, $N=n=50$, $\ell=0.08$ ($N_\ell=20$) and $\delta = 0.02$ ($N_\delta = 5)$.} \label{tab:moving-average-delay}
\end{table}

\subsection{A put option in the Heston model}

We start with a put option in the Heston model to assess the accuracy of our algorithm. We recall the definition of the Heston model
\begin{align*}
  dS_t &= S_t(r_t dt + \sqrt{\sigma_t} (\rho dW^1_t + \sqrt{1 - \rho^2} dW^2_t)) \\
  d\sigma_t &=  \kappa (\theta - \sigma_t) dt + \xi \sqrt{\sigma_t} dW^1_t.
\end{align*}

\begin{table}[htp]
  \centering\begin{tabular}{ccccc}
    \hline
    d & p & M & Price & Variance\\
    \hline
    1 & 2 & 1E5 &  1.71756 &  4.68E-05 \\
    1 & 2 & 1E6 &  1.69802 &  7.68E-06 \\
    1 & 2 & 1E7 &  1.69699 &  4.37E-07 \\
    1 & 3 & 1E5 &  1.73389 &  8.43E-05 \\
    1 & 3 & 1E6 &  1.72354 &  6.63E-06 \\
    1 & 3 & 1E7 &  1.72274 &  8.53E-07 \\
    \hline
  \end{tabular}
	\caption{Put option in the Heston model with $S_0 = K = 100$, $T=1$, $\sigma_0 = 0.01$, $\xi=0.2$, $\theta=0.01$, $\kappa = 2$, $\rho = -0.3$, $r=0.1$, $N=20$} \label{tab:put-heston}
\end{table}
For the put option used in the numerical experiments of Table~\ref{tab:put-heston}, the Longstaff Schwartz algorithm gives $1.74$ using degree $3$ polynomials for the regression and $10^6$ samples. Note that as we only consider in the money paths for the regression step, the payoff function is actually a linear function of the underlying asset --- a degree one polynomial. So there is no need to add the payoff function to the regression basis as for more sophisticated options. Obviously, we consider both the asset price and the volatility process as regression factors.

\JL{Clearly, we see in the figures of Table~\ref{tab:put-heston} that going from $1E6$ to $1E7$ samples does not make any difference on the prices. On the contrary, the prices obtained with $M=1E5$ are always a little higher, which may look surprising. This is a actually related to the bias phenomenon described at the end of Section~\ref{sec-slln}. The variance of the coefficients of the chaos expansion is responsible for introducing a bias into the price. To avoid this, one needs to use sufficiently many Monte Carlo samples to compute the chaos expansions. Anyway, the prices obtained with $p=3$ and $M=1E6$ or $M=1E7$ are within $1\%$ of the standard Longstaff Schwartz price.}

\subsection{Scalability of the parallel implementation}

 The scalability tests were run on a BullX DLC supercomputer containing 3204 cores. The code is written in C++ using the OpenMPI library to handle the communication and the PNL library~\cite{pnl} to compute the chaos expansions in a generic way for any order $p$. We report in Table~\ref{tab:scalability} the evolution of the efficiency with respect to the number of resources used. We recall that the efficiency is defined as the ratio between the sequential running time and the product of the parallel running time times the number of resources. Clearly, the efficiency takes values between $0$ and $1$ and the closer to one, the better. In the example used for the scalability study, we managed to cut down the computational time from an hour and a half to 14 seconds while maintaining the efficiency at almost $0.7$, which represents an astonishing improvement in terms of scalability. For a fixed size problem, it is well-known that the efficiency eventually decreases to zero when the number of processors go to infinity as every algorithm has a purely sequential part which becomes predominant in the end. Hence, the efficiency value of $0.7$ has to be considered together with the absolute computational time. We refer to~\cite{BaudeBossy08,dbggs_10} for experiments on the scalability of different parallel approaches for Bermudan options. Although their framework is a bit different from ours, we can assert that our $0.7$ efficiency proves a very good scalability.
\begin{table}[htp]
  \centering
  \begin{tabular}{ccc}
    \hline
    \#Procs & Time (sec.) & Efficiency \\
    \hline
    1  &  4768  &  1 \\
    2  &  2402  &  0.99 \\
    4  &  1234  &  0.97 \\
    16  &  353  &  0.84 \\
    32  &  173  &  0.86 \\
    64  &  89  &  0.84 \\
    128  &  47  &  0.79 \\
    256  &  24  &  0.76 \\
    512  &  14  &  0.68 \\
    \hline
  \end{tabular}
  \caption{Scalability of the parallel algorithm on the moving average option with delay used of Table~\ref{tab:moving-average-delay} with $M=10^6$ and $p=3$.} \label{tab:scalability}
\end{table}

\section{Conclusion}

In this work, we have presented a new algorithm to price Bermudan option in non Markovian settings: the non Markovian feature can either come from the truly path dependent feature of the option or from the use of rough volatility models for instance. Our algorithm makes it easy to design a generic American option pricer, actually not more difficult than for a European option pricer. Although this may sound a bit ambitious, our algorithm is designed as a black box taking as inputs sample paths of the underlying multi-dimensional Brownian motion and the associated samples of the payoff process, which is basically the same as for European options. The smart design of our algorithm combined with orthogonality feature of the Wiener chaos expansion leads to an embarrassingly parallel algorithm, in which each node samples a bunch of paths, on which it updates the optimal stopping policy. Each node contributes to the computation of the $\hl^M_k$'s and at each time step, we make a reduction to get the value of the $\hl^M_k$'s and then a broadcast makes the coefficients available to everyone. The parallel implementation requires very few communications and therefore shows an impressive efficiency.

The methodology developed in this work in a Brownian setting could be adapted to Lévy processes by adding Charlier polynomials to the Hermite polynomials. We refer to~\cite{LabGei17} for the use of chaos expansions with jumps.

\bibliographystyle{abbrvnat}
\bibliography{biblio.bib}
\end{document}